\documentclass{siamltex}

\usepackage{amsfonts} 
\usepackage{amssymb,amsmath} 
\usepackage{graphicx,color}
\usepackage[breaklinks,colorlinks=true,linkcolor=blue,citecolor=red]{hyperref}

\DeclareMathAlphabet{\itbf}{OML}{cmm}{b}{it}

\def\EE{\mathbb{E}}

\def\RR{\mathbb{R}}

\def\eps{\varepsilon}

\newcommand{\rc}{}
\newcommand{\rcc}{  }

\newcommand{\bc}{  }

\sloppy 

\begin{document}  
\title{Correction to Black-Scholes formula due to fractional stochastic volatility}

\author{Josselin Garnier\footnotemark[1]
 and Knut S\O lna\footnotemark[2]}   

\maketitle

\renewcommand{\thefootnote}{\fnsymbol{footnote}}

\footnotetext[1]{Laboratoire de Probabilit\'es et Mod\`eles Al\'eatoires
\& Laboratoire Jacques-Louis Lions,
Universit{\'e} Paris Diderot,
75205 Paris Cedex 13,
France
{\tt garnier@math.univ-paris-diderot.fr}}

\footnotetext[2]{Department of Mathematics, 
University of California, Irvine CA 92697
{\tt ksolna@math.uci.edu}}

\renewcommand{\thefootnote}{\arabic{footnote}}

\maketitle
 
\begin{abstract}
     Empirical studies show that the volatility may exhibit
   correlations  that decay as  a fractional power of the 
   time offset. 
   The paper presents a rigorous analysis for the case
   when the stationary stochastic volatility model 
    is constructed  in terms of a fractional
    Ornstein Uhlenbeck process to have such correlations.
    It is shown how the associated implied 
    volatility  
    has  a term structure that  is a function of  maturity to a fractional power.
 \end{abstract}

\begin{keywords}
Stochastic volatility, 
implied volatility, fractional Brownian motion, long-range dependence.
\end{keywords}

\begin{AMS}
91G80, 
60H10, 
60G22, 
60K37. 
\end{AMS}

\section{Introduction}

\rc{Our aim in this paper is to provide a framework for analysis
of stochastic volatility problems in the context when the volatility
process possesses correlations that decays like a power law. 
We will both consider the case of  ``long-range''  processes
where  the consecutive increments  of the process are positively 
correlated, corresponding to the so called Hurst coefficient
$H> 1/2$,  as well as the case with  ``short-range'' processes with 
consecutive increments being negatively correlated  with $H < 1/2$.}
Replacing the constant volatility of the 
Black-Scholes model with a random process gives price 
modifications in  financial contracts. It is important to understand
the qualitative behavior of such price modifications for a (class of)
stochastic volatility models since this  can be used for 
calibration purposes. Typically the price modifications are parameterized
by the implied volatility relative to the Black-Scholes model  \cite{fouque11,touzi}.
For illustration we consider here European option pricing and then 
the implied volatility depends on the moneyness, the ratio between 
the strike price and the current price, moreover, the time to maturity.
The term and moneyness  structure of the implied volatility can  be
calibrated  with respect to liquid contracts and then 
used for pricing of related but  less liquid contracts. 
Much of the  work on stochastic volatility models have focussed on situations
when the volatility process is a Markov process, commonly some sort
of a jump diffusion process. 
However, a number of empirical studies
suggest that the volatility process possesses long- and short-range 
dependence,
that is the correlation function of the volatility process has decay that is a 
fractional power of the time offset. This is the class of volatility models
we consider here.  We find that such correlations indeed reflect themself 
in an  implied volatility  fractional term structure.  An important aspect of the modeling
is also the presence of correlation between the volatility shocks and the shocks 
(driving  Brownian motion) of the underlying, this ``leverage effect''  influences
the  implied volatility  in an important way and we shall include it below. 
The leverage effect is well motivated from the modeling viewpoint 
and important to incorporate to fit  observed implied volatilities,
albeit a challenging quantity to estimate \cite{ait2}.
Evidence of leverage and persistence or long-range dependencies have been found
by considering high-frequency  data and incorporated  in   discrete time  series
models  \cite{bollerslev,engle,mykland}.

Here we model in terms 
of a continuous time stochastic volatility model that is a smooth function of a Gaussian 
process. 
We use a martingale method approach which exploits the fact
that the discounted price process is a (local) martingale. 
We model the fractional stochastic volatility (fSV) as a smooth function of a fractional 
Ornstein-Uhlenbeck  (fOU) process.  
We moreover assume that the fSV model has relatively small fluctuations,
of magnitude $\delta \ll 1$ and we derive the associated leading order
expression for implied volatility with respect to this parameter via
an asymptotic analysis.  This gives a parsimonious parameterization
of the implied volatility  which may be exploited for robust calibration.  
The fOU process
is a classic model for a stationary  process with a fractional 
correlation structure.  This process can be expressed in terms of
an integral of a fractional Brownian motion (fBm)  process. The distribution of a fBm 
 process is characterized in terms of the Hurst exponent $H \in (0,1)$.  
The fBm process is 
locally H\"older continuous of exponent $\alpha$  for all 
$\alpha  < H$ and this property is inherited by the 
fOU process.  
The fBm process, $W_t^H$,  is also self-similar in that
\begin{equation}
 \left\{
 W^H_{\alpha t} , t\in \RR  
 \right\}
 \stackrel{dist.}{=}
  \left\{
 \alpha^H W^H_{t} , t\in \RR 
 \right\}
~ \hbox{for all}~   \alpha > 0.
\end{equation}
The self-similarity property is inherited approximately by  the fOU
process on scales smaller than the mean reversion time
of the fOU process that we will denote by $1/a$ below. In this sense we
may refer to the fOU process as a multiscale process on relatively 
short scales. 

The case with $H \in (0,1/2)$ gives  
a fOU process that is a so-called 
``short-range''  dependent process that is rough on short scales and whose correlations
 for small time offsets  decay faster than the linear decay associated with a Markov process.
In fact the decay is as the offset to the fractional power
$2H$. 
In this regime consecutive  increments of the fBm process
are negatively correlated giving a rough process also referred to
as an anti-persistent process.
The enhanced negative correlation with smaller 
Hurst exponent gives a relatively rougher process.

The case with $H \in (1/2,1)$ gives  
a fOU process that is a so-called 
``long-range''  dependent process whose 
correlations for large time offsets decays as the offset to the fractional power
$2(H-1)$. It follows that the correlation function of the fOU process
is not integrable.   This regime corresponds to  a persistent
process where consecutive increments of the fBm
are positively correlated.
The relatively stronger positive correlation for the consecutive increments
of the associated fBm process with increasing $H$ values gives a relatively 
smoother process whose correlations decay relatively slowly. 
For more details regarding the fBm and fOU processes
we refer repectively to \cite{oksendal,coutinb,taqqu,mandelbrot} and \cite{cheridito03,kaarakka}. 

\rcc{In order to simplify notation and interpretation of the results we present them 
in the context of the fractional OU process. However, as we show in Appendix \ref{app:gen},  the results
 readily  generalize to the case with general Gaussian processes with short- and
long-range dependence.}

 A large number of recent papers have considered modeling of volatility in terms
 of processes with short- and long-range dependence. 
   In \cite{coutin} the authors consider a long memory extension of the Heston \cite{heston}
   option pricing model, a fractionally  integrated  square root process,
   a generalization of the early work in \cite{comte}.
   They make use of the analytical tractability of this model, 
    in fact a fractionally integrated version of a Markovian affine diffusion,  
    with affine diffusions considered in \cite{duffie}. The emphasis is on the
     long-range dependent case  ($H > 1/2$) and long time to maturity. The authors focus on the
     conditional expectation of the integrated square volatility and show the 
     fractional  decay of this, moreover, they discuss   estimation schemes for model parameters    
     based on discrete observations. In the Markovian 
     case the mean integrated square volatility
     would exponentially fast approach its mean value and flatten the implied volatility
     term structure.  They remark that long-range dependence provides an explanation for
     observations of non-flat term structure in the regime of large maturities since the long-range    
     dependence may make the implied volatility smile strongly maturity-dependent in this regime, 
     while also producing consistent smiles for short maturities. 
     The model presented in \cite{coutin} was recently revisited in \cite{jacquier} 
     where
     short and long maturity asymptotics are analyzed using large deviations principles.  
     
       The concept of RFSV, Rough Fractional Stochastic Volatility, is put forward in
       \cite{gatheral2,gatheral1}. Here a model with log-volatility modeled by
       a fBm is motivated by analysis of market data, which they state provide
       strong support
       for a value for the  Hurst exponent $H$ around 0.1.  As  explained above
       small values for $H$ correspond to very rough processes.
               It is remarked that such a  
         process can be motivated by modeling of order flow using Hawkes processes. 
         The authors discuss issues related to change from physical to pricing
         measure and use simulated prices 
          to fit well the implied volatility surface in the case of SPX
         with few parameters. They argue  that the  fractional model generates strong
         skews or ``smiles'' in the implied volatility  
          even for very short time to maturity so that
         this modeling provides an alternative to using  jumps to model such an effect.
         The form  of the implied volatility surface and the structure of the returns have
          been used to argue that the asset price
         should be a jump process
          \cite{ait,carr}.  Indeed models with jumps
           may  be used as an alternative  approach to capture smile 
          dynamics to the fractional approach considered here 
          and  recent contributions consider models driven by L\'evy
            processes both for volatility models \cite{figuer,tankov} and directly for price models 
             \cite{benth}.   
                  
        A variant of the model in \cite{gatheral1} is considered in \cite{mendes1}  where
         the log-volatility
        is modeled as a fractional noise, with fractional noise
        being the increment process of a fBm  for a certain
        increment length. The authors discuss the well-posedness of this model
        from the financial perspective and in doing so make use of a truncated version
        of the integral representation of the fBm.  
        In \cite{mendes2} this model is supported  by data analysis and motivated by 
         an agent-based interpretation.

        In \cite{viens1,viens2} the authors consider  the situation when the volatility 
        is modeled as a function of a fOU process whose
        shocks are independent from those of the underlying.  
        Their focus is on a tree-based method for computing prices,  estimation schemes
        for  model parameters, and a particle filtering technique for the unobserved 
        volatility given discrete observations.  They consider some real data examples and find
        estimated values for the Hurst exponent which is larger than $1/2$,
       in particular in a period after a market crash.   In \cite{viens3} small maturity  asymptotic results are presented for this model.   
           
   Among the many papers considering short maturity
   asymptotics,  in the early paper \cite{alos07} Al\'os et al.  use Malliavin calculus to get expressions 
   for the implied volatility in  the regime of small maturity. They find that
   the implied volatility  diverges in the short-range dependent case and flattens
   in the long-range dependent case in the limit of small maturity.
    These  results are consistent with what we present below. 
    The modeling in \cite{alos07} differs from the modeling below in that 
   the  authors consider
   volatility fluctuations at the order one level while below the fluctuations are 
   relatively small, however, we consider any time to maturity.

   Fukasawa \cite{fukasawa11}  discusses the case with small volatility fluctuations
   and short- and long-range dependence impact on the implied volatility
   as an application of the general theory he sets forth.  He uses 
   a non-stationary ``planar''  fBm as the volatility factor
    so that the leading implied volatility
   surface is identified conditioned  on the present value of the implied volatility
   factor only, while below with a stationary model the surface depends on the
   path of the volatility factor until  the present, reflecting the non-Markovian nature
   of fBm.  In \cite{fukasawa15} Fukasawa 
    discusses  the case of short-range 
    dependent processes and short time to maturity   and a framework
    for expansion of the implied volatility surface. He
    uses a representation of fBm due to Muralev \cite{muralev}.   
    He  also considers local stochastic  volatility models and find that
    these are not consistent with power laws in this regime.   
             
             As a further generalization relative to a fractional Brownian motion based model
      the case of multi fractional Brownian motion based models is considered in \cite{corley}.
       This allows  for a non-stationary local regularity or a  time dependent 
       Hurst exponent and then the implied 
       volatility depends on weighted averages of the local Hurst exponent.         
                          
       In \cite{forde} Forde and Zhang  use large deviation principles to compute the short maturity 
      asymptotic form of the implied volatility.
      They consider the correlated case with leverage and obtain results that are
      consistent with those in \cite{alos07}.
      They consider a stochastic volatility model based  on fBm
      and also more general ones where the volatility process is driven by fBms 
       and which is analyzed using rough path  theory.  They also consider 
      large time asymptotics for some fractional processes.
        
       Indeed, a number of recent papers have   considered small maturity asymptotics for 
   implied volatility in the context of mixing, short- or long-range processes.  
   Many of these use large deviation principles or  heat kernel 
   expansions \cite{beres,forde,henry},
   while another approach is to consider the regime 
    around the money \cite{alos07,fukasawa15,tankov}.
    Recent works  deal  also with the regime of large strikes and derive bounds on
    the implied volatility \cite{rodger}.
   Here we take another approach by considering a perturbation situation 
   so that the implied volatility can be expanded around an effective volatility
   \cite{fouque11}, also for large times to maturity. We  model  the volatility as a stationary process,
   a continuous time 
stationary short- or long-range dependent  stochastic volatility process,
    with a  view toward constructing a time consistent scheme. We  use an approach 
   based on the martingale method which is adapted to the 
    fact that the volatility process is not a Markov process.         
 We explicitly take into account the effects of correlation
in between volatility shocks and shocks in the underlying, the leverage effect, 
and its form in short- and long-range dependent cases.   We obtain
expressions for the implied volatility for all times to maturity and also for 
log-moneyness  of order one. 
Explicitly, we model the volatility as
\begin{equation}\label{eq:fSV}
\sigma_t =\bar{\sigma}+F(\delta Z_t^H) ,
\end{equation}
for $Z_t^H$ the fOU process that we discuss in more detail
in Section \ref{sec:fOU}. The function $F$ is assumed to be one-to-one, 
smooth, bounded from below by a constant larger than $-\bar{\sigma}$,
with bounded derivatives,
and such that $F(0)=0$ and $F'(0)=1$. \rc{
It follows that the volatility process inherits   (qualitatively) the correlation
properties of the fBm process.  Indeed,  we have
\begin{eqnarray}
\sigma_t =\bar{\sigma}+F(\delta Z_t^H) = 
\bar{\sigma} + \delta  Z^H_t  +  {\delta^2 } h^\delta(Z^H_t)
\end{eqnarray}
where $h^\delta(y)=(F(\delta y)-  \delta y)/\delta^2$  can be bounded uniformly in $\delta$ by:
\begin{eqnarray}
     |  h^\delta(y) |   \leq    \|F''\|_\infty y^2  . 
\end{eqnarray}
 Note that throughout the paper we will be working with non-dimensionalized 
quantities.   Specifically, if $t^\prime$ represents dimensionalized  time say in units
of  ``trading year''  and $T^\prime$ is a typical time horizon being for instance a typical maturity time
in years then  $t$ is the non-dimensionalized time:
\begin{eqnarray}
     t  =  \frac{t^\prime}{T^\prime} .
\end{eqnarray} }

The main result is then the associated form for  the 
implied volatility, see Equations (\ref{eq:iv1}), (\ref{eq:iv2}) and 
(\ref{eq:iv3})
  below,   we summarize the result next. 
The implied volatility is here the volatility value
that needs to be used in  the constant volatility Black-Scholes 
European option pricing  formula in order to replicate 
the asymptotic fSV option price,  it is, up to terms of order $\delta^2$:
\begin{equation}
\label{eq:main}
I_t = \EE\Big[ \frac{1}{T-t} 
\int_t^T \sigma_s^2 ds |{\cal F}_t \Big]^{\frac{1}{2}} 
+  {\cal A}(T-t) \Big[ 1  + \frac{\log(K/X_t)}{(T-t) / \bar\tau} \Big]
   ,
\end{equation}
for 
\begin{equation}
{\cal A}(\tau) =
 \frac{\delta \rho \bar{\sigma}  \tau^{H+\frac{1}{2}}}{2 \Gamma(H+\frac{5}{2})}
\Big\{ 1-\int_0^{a \tau} e^{-v} \big( 1-\frac{v}{a\tau} \big)^{H+\frac{3}{2}}  
dv \Big\}  , 
\end{equation}
 where  $1/a$ is the mean reversion time of the fOU process 
and $\bar\tau=2/\bar\sigma^2$ a characteristic diffusion time for the 
underlying. 
Furthermore,  $X_t$ is  the underlying price process with evolution as
in (\ref{eq:X}) and   ${\cal F}_t$ its associated filtration. 
Moreover,  $\rho$ is the correlation in between the Brownian motions
driving respectively the volatility process and the underlying price 
process,  $K$ is the strike price so that $K/X$ is the moneyness,
and finally $\tau=T-t$ is time to maturity.   
The first term  in the implied volatility is  
the expected effective volatility over the remaining  time
period  of the option conditioned on the   knowledge
at time $t$, note that this term is random. 
The second term is a leverage term which is present 
in the case that  the underlying and the volatility have  correlated
evolutions  so that $\rho$ is non-zero. Note that $\rho$ is commonly
assumed to be negative.  The  log-moneyness term becomes
relatively more important as the time to maturity becomes small
relative to the characteristic diffusion time. 

In   the short  and long  time to maturity 
regimes we then have for the leverage term:
\begin{equation}\label{array}
 {\cal A}(\tau) \Big[ 1  + \frac{\log(K/X_t)}{\tau / \bar\tau} \Big]
 = \left\{
\begin{array}{ll}
  a_{\rm s}
 \Big[  (\tau/\bar\tau)^{\frac{1}{2}+H}  + 
  (\tau/\bar\tau)^{-\frac{1}{2}+H} 
 {\log(K/X_t)}  \Big]  &   ~\hbox{for}~ a\tau \ll 1  ,  \\
  a_{\rm l}
 \Big[  (\tau/\bar\tau)^{-\frac{1}{2}+H}  + 
  (\tau/\bar\tau)^{-\frac{3}{2}+H} 
 {\log(K/X_t)}  \Big]  &   ~\hbox{for}~ a\tau \gg  1  \, , 
\end{array}
\right.
\end{equation}
with
\begin{equation} 
 a_{\rm s}  = \frac{ \delta  \rho \bar\tau^{H} }{ \sqrt{2} \Gamma(H+\frac{5}{2})} , \quad \quad 
  a_{\rm l} =      \frac{ \delta  \rho \bar\tau^{H-1} }{ \sqrt{2} a \Gamma(H+\frac{3}{2})}  .
\end{equation}
 
We moreover  have for the predicted effective volatility term:
\begin{equation}
 \sigma_{t,T}   \equiv \EE \Big[  \frac{1}{T-t}  \int_t^T \sigma_s^2 ds |{\cal F}_t \Big]^{\frac{1}{2}}   
    = \left\{
\begin{array}{ll}
\sigma_t    &   ~\hbox{for}~ a\tau \ll 1  , \\
\bar\sigma   &   ~\hbox{for}~ a\tau \gg 1 .  
\end{array}
\right. 
\end{equation}
It is important to note that we only assume $\tau =T-t>0$ so that 
in fact the implied volatility for small times to maturity
may be very large for short-range dependent processes. 
This reflects the fact that for short-range dependent processes the volatility path is 
rough and may have a significant impact beyond the current predicted effective 
volatility level.
However, when used in the standard Black-Scholes pricing formula the implied
volatility indeed gives a pricing correction that is $O(\delta)$ for any 
$\tau>0$.   We also note that in the long maturity regime the implied 
volatility level may diverge for long-range dependent processes reflecting 
the fact that long-range dependence gives strong temporal coherence and therefore 
relatively large corrections to the predicted current effective volatility.    

Note next that the calibration of  the 
leverage component of the implied volatility  in the general case
in (\ref{eq:main})
 involves estimation of the group  market parameters:
\begin{eqnarray}
\label{eq:cal1} 
    \bar{\sigma} , \quad H, \quad (\delta \rho), \quad  a  , 
\end{eqnarray} 
 from observed implied volatility data.
 In order to fully identify the model at the current time $t$ 
we need moreover to estimate  the current predicted effective 
volatility over a time to maturity horizon, that is,
$\sigma_{t,t+\tau} $ for
    $0 \leq \tau \leq T_{\rm max}-t$.
  
It is important to note that in our framework the  market parameters
are from the theoretical point of view independent of the current time
$t$.   Thus, in order to calibrate the model with data over a current time
epoch  $  t_1 \leq t \leq t_2$ one may  use all the 
implied volatility recording  
in a joint fitting procedure.   

%
%
%
  
 We remark  that  our results would be modified 
under the presence of general interest rates and market price of risk factors
that we do not consider here.  We also remark that 
identifying a   ``smile'' shape,  that is a more general function in   log-moneyness,
would require  
a higher-order approximation of implied volatility  
\cite{smile}.  Finally,  observe  that the case $H=1/2$ corresponds neither
to a short-range dependent process nor a long-range dependent process,
but the standard case of an Ornstein-Uhlenbeck process and a stochastic
volatility that is a Markovian process with correlations decaying exponentially fast
\cite{fouque11}.

The framework we have presented is general and can be used for processes
for which we can identify the key quantities of interest below. 
We discuss one important special case corresponding to a slow fOU process.
In this case we model the  volatility in terms of the ``slow''  fOU process $Z^{\delta,H}$:
\begin{equation}
Z_{t}^{\delta,H}  = \delta^H \int_{-\infty}^t e^{-\delta a (t-s)} dW^H_s ,
\end{equation}
whose natural time scale is $1/\delta$ 
and whose  variance is order one and given 
by $\sigma_{\rm ou}$ defined by (\ref{def:sigmaou}) below,
 independently of $\delta$.
Then the volatility is
\begin{equation}\label{eq:fSVslow0}
\sigma_t = F( Z_{t}^{\delta,H} ) ,
\end{equation}
where $F$ is a smooth, positive-valued function, bounded away from zero, with bounded derivatives. 
We introduce the two parameters
\begin{equation}\label{eq:pars}
\sigma_0=F( Z_{0}^{\delta,H} ), \quad\quad p_0=F'(Z_{0}^{\delta,H}) ,
\end{equation}
that is, the local level and rate of change of the volatility.
In this case the implied volatility is given by:
\begin{equation}\label{eq:Is}
I_t = \EE\Big[ \frac{1}{T-t} \int_t^T \sigma_s^2 ds |{\cal F}_t \Big]^{\frac{1}{2}} 
+  \frac{\delta^H p_0 \rho \tau_0^H  }{ \sqrt{2} \Gamma(H+\frac{5}{2})}
\Big[ 
 (\tau/\tau_0)^{\frac{1}{2}+H}  + 
  (\tau/\tau_0)^{-\frac{1}{2}+H} 
 {\log(K/X_t)}   
 \Big]
   ,
\end{equation}
 for $\tau_0=2/\sigma_0^2$.  Thus, the slow fractional volatility factor
 yields an implied volatility that   corresponds to the one of the fractional  model in (\ref{eq:fSV})   in the regime of  small maturity,  as given in (\ref{array}).
In the special case that $H=1/2$ the volatility process becomes  a standard Ornstein-Uhlenbeck
process and is in  the class of slow processes considered in
\cite{fouque11} and indeed the implied volatility  in (\ref{eq:Is}) can then be show to be exactly
of the form discussed for the slow correction in \cite{fouque11} (Chapter 5).

The outline of the paper is as follows.
First in Section \ref{sec:fSV} 
we introduce   the details of the  ingredients of the 
fSV model.
In Section \ref{sec:option} we derive the main result of
the paper, the leading order expression for the price
in the situation with a fSV.
The derivation  is based on a contract  with a smooth 
payoff function while the European payoff function 
has a kink singularity and we generalize the result
to this situation in Section \ref{sec:euro}.
Then in Section \ref{sec:implied} we derive the 
expression for the implied volatility and how the
fractional character of the volatility affects this.
We connect  to the  slow time volatility model  in Section \ref{sec:conn} and 
present some concluding remarks in Section
\ref{sec:concl}.
In Appendix \ref{sec:tech} we characterize   some quantities
of interest and associated technical lemmas that are being used
in the price derivation in Section \ref{sec:option}.

\section{The fractional stochastic volatility model}
\label{sec:fSV}
  
We describe in more detail the fBm and fOU processes that
are used in the fSV construction (\ref{eq:fSV}).
  
\subsection{Fractional Brownian motion and its moving-average stochastic integral representation}

A fractional Brownian motion (fBm) is a zero-mean Gaussian process $(W^H_t)_{t\in \RR}$  
with the covariance
\begin{equation}
\label{eq:covfBm}
\EE[ W^H_t W^H_s ] = \frac{\sigma^2_H}{2} \big( |t|^{2H} + |s|^{2H} - |t-s|^{2H} \big) ,
\end{equation}
where $\sigma_H$ is a positive constant.

We use the following moving-average stochastic integral representation of the
fBm \cite{mandelbrot}:
\begin{equation}\label{eq:Wd}
W^H_t = \frac{1}{\Gamma(H+\frac{1}{2})} 
\int_{\RR} (t-s)_+^{H - \frac{1}{2}} -(-s)_+^{H - \frac{1}{2}} dW_s ,
\end{equation}
where $(W_t)_{t \in \RR}$ is a standard Brownian motion over $\RR$.
In this model $(W^H_t)_{t\in \RR}$ is a zero-mean Gaussian process with
the covariance (\ref{eq:covfBm}) where
\begin{eqnarray}
\nonumber
\sigma^2_H &=& \frac{1}{\Gamma(H+\frac{1}{2})^2} 
\Big[ \int_0^\infty \big( (1+s)^{H - \frac{1}{2}} -s^{H - \frac{1}{2}} \big)^ 2 ds +\frac{1}{2H}\Big] \\
&=&  \frac{1}{\Gamma(2H+1) \sin(\pi H)}.
\end{eqnarray}

\subsection{The fractional Ornstein-Uhlenbeck process}
\label{sec:fOU}

We then introduce the fractional Ornstein-Uhlenbeck process (fOU) as 
\begin{equation}\label{eq:Zd}
Z^H_t = \int_{-\infty}^t e^{-a(t-s)} dW^H_s  = W^H_t - a \int_{-\infty}^t e^{-a(t-s)} W^H_s ds.
\end{equation}
It is a zero-mean, stationary Gaussian process,
with variance
\begin{equation}
\label{def:sigmaou}
\sigma^2_{{\rm ou}} =
\EE [ (Z^H_t)^2 ] = \frac{1}{2} a^{-2H} \Gamma(2H+1)  \sigma^2_H ,
\end{equation}
and covariance:
\begin{eqnarray}
\nonumber
\EE [ Z^H_t Z^H_{t+s}  ] &=&
 \sigma^2_{{\rm ou}} \frac{1}{\Gamma (2H+1)}
 \Big[ \frac{1}{2} \int_\RR e^{- |v|} | as+v|^{2H} dv - |as|^{2H}\Big]
\\
&=&
 \sigma^2_{{\rm ou}} \frac{2 \sin (\pi H)}{\pi}
\int_0^\infty \cos( a s x) \frac{x^{1-2H}}{1+x^2} dx .
\end{eqnarray}
Note that  it is not a martingale, neither a Markov process.

\rc{
Substituting  (\ref{eq:Wd})  into the 
second representation  in Eq.~(\ref{eq:Zd}) gives  in view of  stochastic Fubini
 the moving-average integral representation of the fOU:}
\begin{equation}\label{eq:eqOU}
Z^H_t = \int_{-\infty}^t {\cal K}(t-s) dW_s   ,
\end{equation}
where 
\begin{eqnarray}
\label{def:calK}
{\cal K}(t) &=& \frac{1}{\Gamma(H+\frac{1}{2})} 
 \Big[ t^{H - \frac{1}{2}} - a\int_0^t (t-s)^{H - \frac{1}{2}} e^{-as} ds \Big] 
  .
\end{eqnarray}

The properties of the kernel ${\cal K}$ are the following ones:\\
- ${\cal K}$ is nonnegative-valued,
${\cal K} \in L^2(0,\infty)$ for any $H \in (0,1)$ with $\int_0^\infty {\cal K}^2(u) du = \sigma_{\rm ou}^2$, 
and ${\cal K} \in L^1(0,\infty)$ for any $H \in (0,1/2)$.\\
- For small times $at \ll 1$:
\begin{equation}
\label{eq:asyK1}
{\cal K} (t) = \frac{1}{\Gamma(H+\frac{1}{2}) a^{H - \frac{1}{2}} } 
\Big( (at)^{H - \frac{1}{2}}  +o\big(  (at)^{H - \frac{1}{2}} \big) \Big) .
\end{equation}
- For large times $at \gg 1$:
\begin{equation}
\label{eq:asyK2}
{\cal K} (t) = \frac{1}{\Gamma(H-\frac{1}{2}) a^{H - \frac{1}{2}}}  
\Big( (at)^{H - \frac{3}{2}}  +o\big(  (at)^{H - \frac{3}{2}} \big) \Big) .
\end{equation}
For $H \in (0,1/2) $
 the fOU process  possesses short-range correlation properties:
\begin{equation}
\label{eq:corZG2}
\EE [ Z^H_t Z^H_{t+s}  ]  =  \sigma^2_{{\rm ou}}  \Big( 
1 - \frac{1}{\Gamma(2H+1)} (a s)^{2H} + o\big( (as)^{2H}\big) \Big), \quad \quad as \ll 1.
\end{equation}
For $H \in  (1/2,1)$
 it possesses long-range correlation properties:
\begin{equation}
\label{eq:corZG3}
\EE [ Z^H_t Z^H_{t+s}  ]  =  \sigma^2_{{\rm ou}}  
\Big(
 \frac{1}{  \Gamma(2H-1)}  (as)^{2H-2}
+ o\big( (as)^{2H-2}\big) \Big)
, \quad \quad as \gg 1.
\end{equation}
The expansion (\ref{eq:corZG3}) is valid for any $H \in  (0,1/2) \cup (1/2,1)$
and for $H \in  (1/2,1)$ it shows that the correlation function is not integrable at infinity.
This is in contrast to the case of short-range dependent processes and also to
Markov processes for which the correlation function is integrable.

\section{The option price}
\label{sec:option}
The price of the risky asset follows the stochastic differential equation:
\begin{equation}
\label{eq:X}
dX_t = \sigma_t X_t dW^*_t ,
\end{equation}
where the stochastic volatility is
\begin{equation}\label{eq:svm}
\sigma_t =\bar{\sigma}+F(\delta Z_t^H) ,
\end{equation}
$Z_t^H$ has been introduced in the previous section
and is adapted to the Brownian motion $W_t$,
and $W^*_t$ is a Brownian motion that is correlated to the stochastic volatility through
\begin{equation}\label{eq:lev}
W^*_t = \rho W_t + \sqrt{1-\rho^2}
B_t ,
\end{equation}
 where the Brownian motion $B_t$ is independent of $W_t$.
\rc{We remark that the main aspect of the  model whose consequences we want to analyze
here are  the short- respectively long-range properties of the correlation function in
Eqs. (\ref{eq:corZG2}) and (\ref{eq:corZG3}) under the presence
of leverage as in Eq. (\ref{eq:lev}). We will find that this has a dramatic effect
on the asymptotic prices and the associated implied volatility. }

The function $F$ is assumed to be one-to-one, 
smooth, bounded from below by a constant larger than $-\bar{\sigma}$,
\rc{bounded above}, 
with bounded derivatives, and such that $F(0)=0$ and $F'(0)=1$.
\rcc{Note that with this normalization for  the function $F$ it will not appear 
explicitly in the price approximation  in Proposition \ref{prop:0} below as further properties
are not important in that context.} 
Moreover, then the filtration ${\cal F}_t $ generated
by $(B_t,W_t)$ is also the one generated by $X_t$.
Indeed, it is equivalent to the one generated by $(W^*_t,W_t)$, or $(W^*_t,Z^H_t)$.
Since $F$ is one-to-one, it is equivalent to the one generated by $(W^*_t,\sigma_t)$.
Since $\bar{\sigma}+F$ is positive-valued,
it is equivalent to the one generated by $(W^*_t,\sigma_t^2)$, or $X_t$.

We aim at computing the option price defined as the martingale
\begin{equation}
\label{def:optionprice}
M_t =\EE\big[ h(X_T) |{\cal F}_t \big]  ,
\end{equation}  
where $h$ is a smooth  function with bounded derivatives apart from
a finite set of points where it may have a jump discontinuity in its derivative. 
Note that the proof in this section will be given for the case  with $h$ smooth and
bounded. 
However, as we only need to control the function $Q^{(0)}_t(x)$ defined below rather than $h$
the argument can be extended from the smooth  case to the situation with jump discontinuity in 
the derivative which is relevant in the case with a call payoff.
 We carry out this generalization explicitly 
in  Section \ref{sec:euro}.  

The idea of the proof that we present below is to construct an approximation
for $M_t$ which has the correct terminal condition and which up to small
(order $\delta^2$) terms is a martingale. It then follows that we have
a price approximation to $O(\delta^2)$. 

We introduce the operator 
\begin{equation}
{\cal L}_{\rm BS} (\sigma) = \partial_t +\frac{1}{2} \sigma^2 x^2 \partial_x^2 .
\end{equation}

The following proposition gives the first-order correction to the 
expression of the martingale $M_t$ when $\delta$ is small.
\begin{proposition}
\label{prop:0}%
When $\delta$ is small,
we have
\begin{equation}
M_t = Q_t(X_t) +O(\delta^2) ,
\end{equation}
where
\begin{equation}
Q_t(x) = Q_t^{(0)}(x) +\delta \bar{\sigma} \phi_t \big( x^2\partial_x^2 Q_t^{(0)}(x)\big)
+\delta \rho Q_t^{(1)}(x) ,
\end{equation}
$Q_t^{(0)}(x) $ is deterministic and given by the Black-Scholes formula with constant volatility~$\bar{\sigma}$,
\begin{equation}
\label{eq:bs0}
{\cal L}_{\rm BS} (\bar\sigma) Q_t^{(0)}(x) =0,   \quad \quad Q_T^{(0)}(x) = h(x),
\end{equation}
$\phi_t$ is the random component
\begin{equation}
\label{def:phit}
\phi_t = \EE \Big[\int_t^T Z_s^H ds |{\cal F}_t \Big],
\end{equation}
and $Q_t^{(1)}(x) $ is the deterministic correction
\begin{equation}
\label{def:Q1t}
Q_t^{(1)}(x) = \bar{\sigma}^2 x \partial_x \big(x^2 \partial_x^2 Q_t^{(0)}(x)  \big) D_{t,T} ,
\end{equation}
with $D_{t,T}$ defined by 
\begin{equation}
 D_{t,T} =   {\cal D}(T-t), \quad \quad {\cal D}(\tau) = \frac{\tau^{H+\frac{3}{2}}}{\Gamma(H+\frac{5}{2})}
\Big\{ 1-\int_0^{a \tau} e^{-v} \big( 1-\frac{v}{a\tau} \big)^{H+\frac{3}{2}}  dv \Big\}  .
\label{eq:represcalD}
\end{equation}
\end{proposition}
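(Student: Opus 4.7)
The plan is the classical martingale method adapted to the non-Markovian fSV setting: pick $Q_t(X_t)$ so that (i) $Q_T(X_T) = h(X_T)$ and (ii) the Itô differential $dQ_t(X_t)$ has drift of order $\delta^2$. Then $M_t - Q_t(X_t)$ is an ${\cal F}_t$-martingale whose terminal value is $O(\delta^2)$ in a suitable norm, which yields the result. First I would apply Itô to $Q_t^{(0)}(X_t)$: using (\ref{eq:bs0}) together with $\sigma_t^2 = \bar\sigma^2 + 2\bar\sigma\delta Z_t^H + O(\delta^2)$ (which follows from $F(\delta y) = \delta y + O(\delta^2)$), the drift equals $\delta\bar\sigma Z_t^H G_t\, dt + O(\delta^2)\, dt$ with $G_t := X_t^2\partial_x^2 Q_t^{(0)}(X_t)$.

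Next I would compute $d\phi_t$. Substituting (\ref{eq:eqOU}) into (\ref{def:phit}) and applying stochastic Fubini gives
\begin{equation*}
\phi_t = \int_{-\infty}^t \Bigl( \int_t^T {\cal K}(s-u)\, ds \Bigr)\, dW_u,
\end{equation*}
since the piece of $\int_t^T Z_s^H\, ds$ driven by $(W_u)_{u\ge t}$ has zero conditional expectation. Differentiating in $t$ yields $d\phi_t = \Psi_t\, dW_t - Z_t^H\, dt$ with $\Psi_t := \int_0^{T-t}{\cal K}(r)\, dr$. A key structural fact is that both $G_t$ and $R_t := (x\partial_x)(x^2\partial_x^2 Q_t^{(0)})|_{x=X_t}$ still satisfy ${\cal L}_{\rm BS}(\bar\sigma) \cdot = 0$, because after the change of variable $y=\log x$ the BS operator has constant coefficients and commutes with $\partial_y = x\partial_x$ and with $\partial_y^2 - \partial_y = x^2\partial_x^2$. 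Applying Itô to $\delta\bar\sigma\phi_t G_t$ then produces, up to $O(\delta^2)$, the drift $-\delta\bar\sigma Z_t^H G_t\, dt + \delta\bar\sigma^2 \rho \Psi_t R_t\, dt$: the first term cancels the drift from $Q_t^{(0)}$, and the second comes from the cross-variation $d\langle \phi,G\rangle_t = \rho\Psi_t \bar\sigma X_t\partial_x G_t\, dt + O(\delta)\, dt$.

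To kill the remaining drift $\delta\rho\bar\sigma^2\Psi_t R_t\, dt$, apply Itô to $\delta\rho Q_t^{(1)}(X_t) = \delta\rho\bar\sigma^2 R_t D_{t,T}$. Because $R$ solves the BS equation, the only surviving $O(\delta)$ drift comes from $\partial_t D_{t,T}$ and equals $-\delta\rho\bar\sigma^2 R_t {\cal D}'(T-t)\, dt$. Matching forces ${\cal D}'(\tau) = \int_0^\tau {\cal K}(r)\, dr$ together with ${\cal D}(0)=0$ (the latter also being needed so that $Q_T(X_T)=h(X_T)$), equivalently ${\cal D}(\tau) = \int_0^\tau (\tau-r){\cal K}(r)\, dr$. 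Substituting (\ref{def:calK}): the $r^{H-\frac{1}{2}}$ piece integrates to $\tau^{H+\frac{3}{2}}/\Gamma(H+\frac{5}{2})$ using $\Gamma(H+\frac{5}{2}) = (H+\frac{1}{2})(H+\frac{3}{2})\Gamma(H+\frac{1}{2})$, and a Fubini swap in the remaining convolution piece produces $-\Gamma(H+\frac{5}{2})^{-1}\tau^{H+\frac{3}{2}}\int_0^{a\tau} e^{-v}(1-v/(a\tau))^{H+\frac{3}{2}}\, dv$, which is exactly (\ref{eq:represcalD}).

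At $t=T$ we have $\phi_T=0$ and $D_{T,T}={\cal D}(0)=0$, so $Q_T(X_T)=h(X_T)$. The $O(\delta^2)$ remainder consists of the $F(\delta Z_t^H)^2$ contribution to $\sigma_t^2$, the $\delta$-correction to $\sigma_t$ in the martingale parts feeding into $d\langle\phi,G\rangle_t$ and into $dR(t,X_t)$, and the cross term $\delta\bar\sigma\phi_t\cdot(\sigma_t^2-\bar\sigma^2)X_t^2\partial_x^2 G_t\, dt$. The main obstacle is to control these uniformly in $t\in[0,T]$ in the rough regime $H<1/2$, where ${\cal K}(r)\sim r^{H-\frac{1}{2}}$ is singular at the origin: one needs polynomial moment bounds on $Z_t^H$, $\phi_t$ and $\Psi_t$, together with bounds on the Greeks $\partial_x^k Q_t^{(0)}$ for $k\le 4$ (standard when $h$ is smooth and bounded). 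The integrability ${\cal K}\in L^2(0,\infty)$ and the auxiliary estimates collected in Appendix \ref{sec:tech} are exactly what closes the bound, and the generalization to the call payoff with its kinked derivative is then handled separately in Section \ref{sec:euro} by working with $Q_t^{(0)}$ rather than $h$.
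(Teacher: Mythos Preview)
Your proposal is correct and follows essentially the same martingale-method approach as the paper: apply It\^o to $Q_t^{(0)}(X_t)$, introduce $\phi_t$ to absorb the leading $\delta\bar\sigma Z_t^H$ drift via $d\phi_t=\Psi_t\,dW_t-Z_t^H\,dt$, and then add the deterministic corrector $Q_t^{(1)}$ to cancel the $\rho$-cross-variation term, with the remainder controlled by the moment bounds in Appendix~\ref{sec:tech}. Your organization is slightly cleaner in two respects---you invoke the commutation of ${\cal L}_{\rm BS}(\bar\sigma)$ with $x\partial_x$ and $x^2\partial_x^2$ explicitly (the paper uses this implicitly when it writes ``where we have used again ${\cal L}_{\rm BS}(\bar\sigma)Q_t^{(0)}=0$''), and you derive the closed form (\ref{eq:represcalD}) from ${\cal D}(\tau)=\int_0^\tau(\tau-r){\cal K}(r)\,dr$ by an explicit Fubini swap, which the paper simply asserts---but the substance is the same.
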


\bc{ 
The correction $Q_t^{(1)}$ solves the problem in (\ref{eq:q1}) below. The function  ${\cal D}(\tau)$ 
derives from solving this problem and is:
\[
{\cal D}(\tau) =  \int_0^\tau (\tau-u){\cal K}(u) du,
\]
it  is discussed in more detail in  Lemma \ref{lem:2}  in Appendix  \ref{sec:tech}.
}

Note that the stochastic volatility process we 
have introduced in Eq.  (\ref{eq:svm}) is a stationary power-law process.
As a consequence of our modeling we have  in particular that $\phi_t$ 
is a Gaussian ${\cal F}_t$-measurable process, it   reflects the 
influence of the past on the future stochastic  volatility path conditioned
on the present.    We next present the proof of Proposition  \ref{prop:0} and remark
that in the analytic framework that we set forth, exploiting the ``$\eps$-martingale
decomposition''  \cite{fouque11}, the cases with $H< 1/2$ and $H> 1/2$ can be treated in a uniform way. 

\bc{ The proof we present below holds for general Gaussian processes $Z_t$ with short- and long-range correlations,
while the expression to the right  in Eq. (\ref{eq:represcalD}) is specific  to the fOU process.
We discuss the general Gaussian case and the general expression (\ref{def:calDB}) of 
${\cal D}(\tau)$ in Appendix~\ref{app:gen}.  
}
 
\begin{proof}
For any smooth function $q_t(x)$, we have by It\^o's formula
\begin{eqnarray*}
dq_t(X_t) &=& \partial_t q_t(X_t) dt +  \big( x \partial_x q_t\big) (X_t) \sigma_t dW_t^*
+\frac{1}{2}  \big( x^2 \partial_x^2 q_t\big) (X_t) \sigma_t^2 dt\\
&=&
{\cal L}_{\rm BS}(\sigma_t) q_t(X_t) dt +  \big( x \partial_x q_t\big) (X_t) \sigma_t dW_t^*  ,
\end{eqnarray*}
the last term being a martingale.
Therefore, by (\ref{eq:bs0}), we have
\begin{eqnarray}\label{eq:Q0}
dQ_t^{(0)}(X_t)  &=&
\big( \delta\bar{\sigma}  Z^H_t + \frac{\delta^2 }{2} g^\delta(Z^H_t)\big) \big( x^2 \partial_x^2 \big) Q_t^{(0)}(X_t) dt + dN^{(0)}_t ,
\end{eqnarray}
with $N_t^{(0)}$ a martingale,
$$
dN_t^{(0)}= \big( x \partial_x Q_t^{(0)}\big) (X_t) \sigma_t dW_t^* ,
$$
and $g^\delta(y)$ is the function
\begin{eqnarray*}
g^\delta(y) = 2 \bar{\sigma} \frac{F(\delta y)-\delta y}{\delta^2} +\frac{F(\delta y)^2}{\delta^2} ,
\end{eqnarray*}
that can bounded uniformly in $\delta$ by
\begin{eqnarray*}
|g^\delta(y)| \leq \big( \bar{\sigma} \|F''\|_\infty + \|F'\|_\infty^2 \big) y^2.
\end{eqnarray*}  \rc{
Note also that in Eq. (\ref{eq:Q0})  (and below) we use the notation
\begin{eqnarray*}
\big( x^2 \partial_x^2 \big) Q_t^{(0)}(X_t) = \left(  \big( x^2 \partial_x^2 \big) Q_t^{(0)}(x) \right)\big|_{ x=X_t}  . 
\end{eqnarray*} }

Let $\phi_t$ be defined by (\ref{def:phit}). We have
\begin{equation}\label{eq:phi}
\phi_t = \psi_t - \int_0^t Z^H_s ds ,
\end{equation} 
where the martingale $\psi_t$ is defined by 
\begin{equation}
\label{def:Kt}
\psi_t = \EE \Big[ \int_0^T Z_s^H ds | {\cal F}_t\Big] ,
\end{equation}
and it is studied in Appendix \ref{sec:tech}.
We can write
$$
Z^H_t  \big( x^2 \partial_x^2 \big) Q_t^{(0)}(X_t) dt =
 \big( x^2 \partial_x^2 \big) Q_t^{(0)}(X_t) d\psi_t -
 \big( x^2 \partial_x^2 \big) Q_t^{(0)}(X_t) d\phi_t .
$$
By It\^o's formula:
\begin{eqnarray*}
d \big( \phi_t \big( x^2 \partial_x^2 \big) Q_t^{(0)}(X_t)\big) &=&
\big( x^2 \partial_x^2 \big) Q_t^{(0)}(X_t) d\phi_t\\
&&+
\big( x\partial_x\big(  x^2 \partial_x^2 \big)\big) Q_t^{(0)}(X_t) \sigma_t \phi_t dW_t^*
\\
&&+
\frac{1}{2}\big( x^2\partial_x^2\big(  x^2 \partial_x^2 \big)\big) Q_t^{(0)}(X_t) \sigma_t^2 \phi_t dt
\\
&&+
 \big(  x^2 \partial_x^2  \partial_t \big) Q_t^{(0)}(X_t)  \phi_t dt
\\
&&+ \big( x\partial_x\big(  x^2 \partial_x^2 \big)\big) Q_t^{(0)}(X_t) \sigma_t d\left< \phi ,W^*\right>_t \\
&=&\big( x^2 \partial_x^2 \big) Q_t^{(0)}(X_t) d\phi_t\\
&&+
\big( x\partial_x\big(  x^2 \partial_x^2 \big)\big) Q_t^{(0)}(X_t) \sigma_t \phi_t dW_t^*
\\
&&+
\big( \delta \bar{\sigma} Z^H_t+\frac{1}{2}\delta^2 g^\delta(Z^H_t)\big) 
\big( x^2\partial_x^2\big(  x^2 \partial_x^2 \big)\big) Q_t^{(0)}(X_t) \phi_t dt
\\
&&+ \big( x\partial_x\big(  x^2 \partial_x^2 \big) \big) Q_t^{(0)}(X_t) \sigma_t d\left< \phi ,W^*\right>_t   ,
\end{eqnarray*}
where we have used again ${\cal L}_{\rm BS} (\bar\sigma) Q_t^{(0)}(x) =0$.
We have $\left< \phi ,W^*\right>_t = \rho \left< \psi ,W\right>_t$ and therefore
\begin{eqnarray*}
d \big( \phi_t \big( x^2 \partial_x^2 \big) Q_t^{(0)}(X_t)\big)
&=& - Z^H_t  \big( x^2 \partial_x^2 \big) Q_t^{(0)}(X_t) dt \\
&&+
\big( \delta \bar{\sigma} Z^H_t+\frac{1}{2}\delta^2 g^\delta(Z^H_t)\big) 
\big( x^2\partial_x^2\big(  x^2 \partial_x^2 \big)\big) Q_t^{(0)}(X_t) \phi_t dt
\\
&&+ \rho \big( x\partial_x\big(  x^2 \partial_x^2 \big) \big) Q_t^{(0)}(X_t) \sigma_t d\left<\psi ,W\right>_t\\
&&+ dN^{(1)}_t ,
\end{eqnarray*}
where $N^{(1)}_t$ is a martingale,
$$
dN^{(1)}_t =\big( x\partial_x\big(  x^2 \partial_x^2 \big)\big) Q_t^{(0)}(X_t) \sigma_t \phi_t dW_t^*
+
 \big( x^2 \partial_x^2 \big) Q_t^{(0)}(X_t) d\psi_t .
$$
Therefore:
\begin{eqnarray}
\nonumber
&&d \big( Q_t^{(0)}(X_t)+\delta \bar{\sigma}\phi_t \big( x^2 \partial_x^2 \big) Q_t^{(0)}(X_t)\big) \\
\nonumber
&&=  \big( \delta^2 \bar{\sigma}^2 Z^H_t+\frac{1}{2}\delta^3\bar{\sigma} g^\delta(Z^H_t)\big) 
\big( x^2\partial_x^2\big(  x^2 \partial_x^2 \big)\big) Q_t^{(0)}(X_t) \phi_t dt
\\
\nonumber
&& \quad +\frac{\delta^2 }{2} g^\delta(Z^H_t)  \big( x^2 \partial_x^2 \big) Q_t^{(0)}(X_t) dt 
+ \delta \bar{\sigma}\rho \big( x\partial_x\big(  x^2 \partial_x^2 \big) \big) Q_t^{(0)}(X_t) \sigma_t d\left<\psi ,W\right>_t\\
&&\quad + 
dN^{(0)}_t +\bar{\sigma}\delta dN^{(1)}_t  .
\end{eqnarray}

The deterministic function $Q^{(1)}_t$ defined by (\ref{def:Q1t}) satisfies
\begin{eqnarray}\label{eq:q1}
{\cal L}_{\rm BS}(\bar{\sigma}) Q^{(1)}_t(x) = -\bar{\sigma}^2
\big( x\partial_x \big( x^2 \partial_x^2 Q^{(0)}_t(x)\big)\big) \theta_{t,T}, \quad \quad Q^{(1)}_T(x) = 0,
\end{eqnarray}
where $\theta_{t,T}$ is such that 
$$
d\left< \psi,W\right>_t =\theta_{t,T} dt,
$$
and it is given by (see Lemma \ref{lem:1}):
\begin{equation}
\label{def:thetatT}
\theta_{t,T} = 
 \int_t^T {\cal K}(v-t) dv = \int_0^{T-t}{\cal K}(v) dv .
\end{equation}
Applying It\^o's formula
\begin{eqnarray*}
dQ_t^{(1)}(X_t)  
&=&
{\cal L}_{\rm BS}(\sigma_t) Q_t^{(1)}(X_t) dt +  \big( x \partial_x Q_t^{(1)}\big) (X_t) \sigma_t dW_t^* \\
&=& {\cal L}_{\rm BS}(\bar\sigma) Q_t^{(1)}(X_t) dt + 
\big( \delta\bar{\sigma}  Z^H_t + \frac{\delta^2 }{2} g^\delta(Z^H_t)\big) \big( x^2 \partial_x^2 \big) Q_t^{(1)}(X_t) dt 
\\&&+
 \big( x \partial_x Q_t^{(1)}\big) (X_t) \sigma_t dW_t^*  \\
 &=& -\bar{\sigma}^2
\big( x\partial_x \big( x^2 \partial_x^2 \big)\big)Q^{(0)}_t (X_t) d\left< \psi,W\right>_t  \\&& +
\big( \delta\bar{\sigma}  Z^H_t
+ \frac{\delta^2 }{2} g^\delta(Z^H_t)\big) \big( x^2 \partial_x^2 \big) Q_t^{(1)}(X_t) dt + dN^{(2)}_t ,
\end{eqnarray*}
where $N^{(2)}_t$ is a martingale,
$$
dN^{(2)}_t =  \big( x \partial_x Q_t^{(1)}\big) (X_t) \sigma_t dW_t^*   .
$$
Therefore
\begin{eqnarray}
d \big( Q_t^{(0)}(X_t)+\delta \bar{\sigma}\phi_t \big( x^2 \partial_x^2 \big) Q_t^{(0)}(X_t)
+ \delta \rho Q_t^{(1)}(X_t)  
\big)
= dN_t -d R_{t,T} ,
\end{eqnarray}
where $N_t$ is a martingale,
\begin{eqnarray}
\label{def:N4}
N_t = \int_0^t dN^{(0)}_s +\bar{\sigma}\delta dN^{(1)}_s   +  \rho \delta dN^{(2)}_s ,
\end{eqnarray}
and $R_{t,T}$ is of order $\delta^2$:
\begin{eqnarray}
\nonumber
R_{t,T} 
&=& \delta^2  \int_t^T \big(  \bar{\sigma}^2 Z^H_s+\frac{1}{2}\delta \bar{\sigma} g^\delta(Z^H_s)\big) 
\big( x^2\partial_x^2\big(  x^2 \partial_x^2 \big)\big) Q_s^{(0)}(X_s) \phi_s ds
\\
\nonumber
&&  + \frac{\delta^2 }{2}  \int_t^T g^\delta(Z^H_s)  \big( x^2 \partial_x^2 \big) Q_s^{(0)}(X_s) ds
 + \delta^2  \int_t^T  \bar{\sigma}\rho \big( x\partial_x\big(  x^2 \partial_x^2 \big) \big) Q_s^{(0)}(X_s) Z^H_s
 \theta_{s,T} ds
 \\
&&
+ \delta^2  \int_t^T
\big(  \rho \bar{\sigma}  Z^H_s+ \frac{\delta }{2} \rho g^\delta(Z^H_s)\big)
 \big( x^2 \partial_x^2 \big) Q_s^{(1)}(X_s) ds .
\label{def:R4}
\end{eqnarray}

Then with $Q_t(x)$ defined as in Proposition \ref{prop:0} we have
  $Q_T(x) = h(x)$ because $Q_T^{(0)}(x)=h(x)$, $\phi_T=0$, and $Q^{(1)}_T(x)=0$.
Therefore
\begin{eqnarray}
\nonumber
M_t &=& \EE \big[ h(X_T) |{\cal F}_t \big] = 
\EE \big[ Q_T(X_T) |{\cal F}_t \big] 
=
Q_t(X_t) + \EE \big[ N_T-N_t | {\cal  F}_t \big]+
\EE \big[ R_{t,T} |{\cal F}_t \big]\\
&=&
Q_t(X_t)  +
\EE \big[ R_{t,T} |{\cal F}_t \big]
 ,
\end{eqnarray}
which completes the proof since $\EE \big[ R_{t,T} |{\cal F}_t \big]$ is of order $\delta^2$.
\end{proof}

\section{Accuracy with European option}
\label{sec:euro}
In the derivation above we assumed a smooth payoff function.
Since  important classes of payoff functions have non-smooth payoff
we generalize here the proof to such a class by considering a European 
option. 
For a European option $h(x)=(x-K)_+$ we have from Eq.   1.41  in \cite{fouque11}   
\begin{eqnarray}
\nonumber
Q^{(0)}_t(x) &=& 
x \Phi
\Big(\frac{1}{\bar{\sigma}\sqrt{T-t}}  \log \big(\frac{x}{K}\big) + \frac{\bar{\sigma} \sqrt{T-t} }{2}  \Big)
\\
&&- 
K \Phi
\Big(\frac{1}{\bar{\sigma}\sqrt{T-t}}   \log \big(\frac{x}{K}\big)- \frac{\bar{\sigma} \sqrt{T-t} }{2} \Big),
\label{def:Q0e}
\end{eqnarray}
where $\Phi$ is the cumulative distribution function of the standard normal distribution.
We can see that $h$ is not smooth so that the hypotheses of Proposition \ref{prop:0} are not satisfied.
However the conclusions of Proposition \ref{prop:0} still hold true 
as we now show.

\begin{proof}
One has to show that $R_{t,T}$ defined by (\ref{def:R4})
satisfies $\EE \big[ R_{t,T}|{\cal F}_t \big]$ is of order $\delta^2$ in $L^p$ 
for any $p$ and that the local martingale $N_t$ defined by (\ref{def:N4}) is a martingale (up to time $T$). 
The problem comes from the fact that the derivatives of $Q^{(0)}_t(x)$ blow up 
when $t \to T$. However this blow up is not strong as we show below.
We first state a few properties of the deterministic and random terms that appear in the expression
of $R_{t,T}$:\\
- The deterministic function $Q^{(0)}_t(x)$ given by (\ref{def:Q0e}) satisfies
$$
 \partial_x^k Q^{(0)}_t(x) \leq C \Big( 1 + \frac{1}{(T-t)^{\frac{k-1}{2}}}\Big) ,
$$
for any $1\leq k \leq 4$, $t \in [0,T]$, $x \in (0,\infty)$, and for some constant $C$ \rc{(see Appendix B in \cite{fouque03})}.\\
- The deterministic quantity $D_{t,T}$ \rc{given by (\ref{eq:represcalD})} satisfies
$$
D_{t,T}  \leq C  (T-t)^{H+\frac{3}{2}} ,
$$
for any  $t \in [0,T]$ and for some constant $C$  \rc{(see Lemma \ref{lem:2} and below in Appendix \ref{sec:tech})}.\\
- The deterministic quantity $\theta_{t,T}$ \rc{defined by (\ref{def:thetatT})} satisfies
$$
\theta_{t,T}  \leq C  (T-t)^{H+\frac{1}{2}} ,
$$
for any  $t \in [0,T]$ and for some constant $C$ \rc{(substitute (\ref{eq:asyK1}-\ref{eq:asyK2}) into (\ref{def:thetatT}))}.\\
- The random component $\phi_t$ defined by (\ref{def:phit}) satisfies
$$
\EE [ |\phi_t|^p ]^{\frac{1}{p}} \leq C_p (T-t) ,
$$
for any  $t \in [0,T]$ and for some constant $C_p$ for any $p>0$  \rc{(apply Lemma \ref{lem:3} in Appendix \ref{sec:tech} and use the fact that $\phi_t$ is Gaussian)}.\\
- The random process $Z_t^H$ satisfies
$$
\EE [ |Z_t^H|^p ]^{\frac{1}{p}}  \leq C_p   ,
$$
for any  $t \in [0,T]$ and for some constant $C_p$ for any $p>0$ \rc{(use the fact that $Z_t^H$ is Gaussian, stationary, with mean zero and variance $\sigma_{\rm ou}^2$)}.\\
As a consequence, the deterministic function $Q^{(1)}_t(x)$ satisfies
$$
| \partial_x^k Q^{(1)}_t (x) | \leq C \Big( (T-t)^{H+\frac{3}{2}} + (T-t)^{H+\frac{1}{2}-\frac{k}{2}}  \Big)\big(1 +x^3\big) ,
$$
for any $1\leq k \leq 2$, $t \in [0,T]$, $x\in (0,\infty)$, and for some constant $C$.\\
Using  (\ref{def:R4}) 
and the previous estimates we find that, 
for any $p>0$, there exists a constant $C_p$ such that 
\begin{eqnarray*}
\EE [ |R_{t,T}|^p ]^{\frac{1}{p}} 
&\leq& C_p \delta^2 \int_t^T (T-s)^{-\frac{1}{2}} +  (T-s)^{-\frac{1}{2}} +  (T-s)^{H-\frac{1}{2}}  +  (T-s)^{H-\frac{1}{2}} ds  \\
&\leq& C_p \delta^2 \big( (T-t)^{\frac{1}{2}} + (T-t)^{H+\frac{1}{2}}\big) ,
\end{eqnarray*}
for any $\delta \in (0,1)$ and $t \in [0,T]$,
which shows the desired result for $R_{t,T}$.\\
Moreover, the local martingales $N^{(j)}_t$ in  (\ref{def:N4}) are continuous square-integrable martingales 
up to time $T$ whose brackets are
\begin{eqnarray*}
d \left<N^{(j)}\right>_t &=&
{\cal N}^{(j)}_t d t ,  \quad j=0,1,2,\\
{\cal N}^{(0)}_t &=&  \big(  \sigma_t \big(x \partial_x Q_t^{(0)}\big) (X_t) \big)^2, \\
{\cal N}^{(1)}_t &=&
 \big( \big( x\partial_x\big(  x^2 \partial_x^2 \big)\big) Q_t^{(0)}(X_t) \sigma_t  \phi_t \big)^2 
\\
&&
+2 \rho \theta_{t,T} 
\big( \big( x\partial_x\big(  x^2 \partial_x^2 \big)\big) Q_t^{(0)}(X_t) \sigma_t \phi_t \big) 
\big(  \big( x^2 \partial_x^2 \big) Q_t^{(0)}(X_t) \big) \\
&&
+
\big(  \big( x^2 \partial_x^2 \big) Q_t^{(0)}(X_t) \big)^2 \theta_{t,T}^2 ,\\
{\cal N}^{(2)}_t &=& \big( \sigma_t \big( x \partial_x Q_t^{(1)}\big) (X_t) \big)^2 ,
\end{eqnarray*}
where the ${\cal N}^{(j)}_t $ are uniformly bounded with respect to $t \in [0,T]$ in $L^p$ for any $p$, which concludes the proof.
\end{proof}

\section{The implied volatility}
\label{sec:implied}

\rcc{We now compute and discuss the implied  volatility  associated with the price
approximation given in Proposition \ref{prop:0}.
This implied volatility  is the volatility that when used in the constant 
volatility Black-Scholes pricing formula gives the same  price as the approximation,
to  the order of the approximation.  }
The implied volatility in the context of the European option 
introduced in the previous section is then given by
\begin{equation}
\label{eq:iv1}
I_t = \bar{\sigma} + \delta    \frac{\phi_t}{T-t}
+ \delta  \rho D_{t,T}\Big[ \frac{\bar{\sigma}}{2(T-t)} + \frac{\log(K/X_t)}{\bar{\sigma} (T-t)^2} \Big]
+O(\delta^2)  .
\end{equation}
The first two terms can be combined and rewritten as (up to terms of order $\delta^2$):
\begin{equation}
\bar{\sigma} + \delta    \frac{\phi_t}{T-t} = \EE\Big[ \frac{1}{T-t} \int_t^T \sigma_s^2 ds |{\cal F}_t \Big]^{\frac{1}{2}} 
+O(\delta^2).
\end{equation}

When $a(T-t)\ll 1$ the implied volatility is random and we have  
(see  Lemma \ref{lem:3}) and Eq. (\ref{eq:behDsmall}) :
\begin{equation}
\label{eq:iv2}
I_t = \bar{\sigma} + \delta  Z_t^H
+ \delta \frac{ \rho }{ \Gamma(H+\frac{5}{2})}
 \Big[ \frac{\bar{\sigma}}{2}(T-t)^{\frac{1}{2}+H}  + \frac{\log(K/X_t)}{\bar{\sigma} (T-t)^{\frac{1}{2}-H}} \Big]
  .
\end{equation}
Note that, for  $H \in (0,1/2)$, the implied volatility blows up at small time-to-maturity~$T-t$.
\rc{Note,  moreover  that the result above is valid in the asymptotic regime
$\delta \ll 1$.  Indeed, for $\bar{\sigma}$ being an order one strictly positive quantity  
 the implied volatility in Eq. (\ref{eq:iv2})  is strictly positive for $\delta$ small enough.    }  

When $a(T-t) \gg 1$, the quantity $D_{t,T}$ is of order $(T-t)^{H+\frac{1}{2}}$ and is 
deterministic (by Lemma \ref{lem:2}), while the fluctuations of $\phi_t$ 
are of order $(T-t)^H$ at most and are therefore negligible (by Lemma \ref{lem:3}).
As a consequence, when  $a(T-t) \gg 1$, we can write the implied volatility as:
\begin{equation}\label{eq:iv3}
I_t = \bar{\sigma}  
+ \delta \frac{ \rho}{a \Gamma(H+\frac{3}{2})}
 \Big[ \frac{\bar{\sigma}}{2 } (T-t)^{H-\frac{1}{2}}+ \frac{\log(K/X_t)}{\bar{\sigma} (T-t)^{\frac{3}{2}-H}} \Big]
  .
\end{equation}
Note that, for  $H \in (1/2,1)$, the implied volatility blows up at large time-to-maturity~$T-t$.
\bc{ We remark that the factors multiplying the square brackets in
Eqs.~(\ref{eq:iv2})  and (\ref{eq:iv3}) are slightly modified in the general case
when $Z_t$ is a general Gaussian process, see Appendix 
\ref{app:gen}.} 

\section{A slow volatility factor}
\label{sec:conn}
%
%
%
%
%
%

We show in this section that the approach developed in this paper can be applied to other stochastic volatility models.
Here we consider the following model
\begin{equation}\label{eq:fSVslow}
\sigma_t = F( Z_{t}^{\delta,H} ) ,
\end{equation}
where $F$ is a smooth, positive-valued function, bounded away from zero, with bounded derivatives, 
and $Z_{t}^{\delta,H} $ is a rescaled fOU process:
\begin{equation}
d Z_{t}^{\delta,H} = \delta^H d W^H_t - \delta a Z_{t}^{\delta,H} dt  ,
\end{equation}
whose natural time scale is $1/\delta$. It has the form
\begin{equation}
Z_{t}^{\delta,H}  = \delta^H \int_{-\infty}^t e^{-\delta a (t-s)} dW^H_s.
\end{equation}
Its moving-average integral representation is
\begin{equation}
Z^{\delta,H}_t = \int_{-\infty}^t {\cal K}^\delta(t-s) dW_s, \quad \quad {\cal K}^\delta(t) = \delta^{\frac{1}{2}} {\cal K}(\delta t),
\end{equation}
where ${\cal K}$ is defined by (\ref{def:calK}). In particular its variance is $\sigma_{\rm ou}$ defined by (\ref{def:sigmaou}),
independently of $\delta$.
This model is therefore characterized by strong but slow fluctuations of the volatility.
If the price of the risky asset follows the stochastic differential equation (\ref{eq:X}),
we get a result similar to Proposition \ref{prop:0}.

\begin{proposition}
\label{prop:1}%
When $\delta$ is small,
denoting $\sigma_0=F( Z_{0}^{\delta,H} )$ and $p_0=F'(Z_{0}^{\delta,H})$,
the option price (\ref{def:optionprice}) is of the form
\begin{equation}
M_t = Q_t(X_t) +O(\delta^{2H}) ,
\end{equation}
where
\begin{equation}
Q_t(x) = Q_t^{(0)}(x) +  \sigma_0 p_0 \phi_t^\delta \big( x^2\partial_x^2 Q_t^{(0)}(x)\big)
+\delta^H \rho p_0 Q_t^{(1)}(x) ,
\end{equation}
$Q_t^{(0)}(x) $ is given by the Black-Scholes formula with constant volatility~$\sigma_0$,
\begin{equation}
\label{eq:bs00}
{\cal L}_{\rm BS} (\sigma_0) Q_t^{(0)}(x) =0,   \quad \quad Q_T^{(0)}(x) = h(x),
\end{equation}
$\phi_t^\delta$ is the random component
\begin{equation}
\label{def:phitdelta}
\phi_t^\delta = \EE \Big[\int_t^T Z_s^{\delta,H}-Z_0^{\delta,H} ds |{\cal F}_t \Big],
\end{equation}
and $Q_t^{(1)}(x) $ is the correction
\begin{equation}
\label{def:Q1tdelta}
Q_t^{(1)}(x) = \sigma_0^2 x \partial_x \big(x^2 \partial_x^2 Q_t^{(0)}(x)  \big) D_{t,T} ,
\end{equation}
with $D_{t,T}$ defined by
\begin{equation}
D_{t,T} = \frac{(T-t)^{H+\frac{3}{2}}}{\Gamma(H+\frac{5}{2})} .
\end{equation}
\end{proposition}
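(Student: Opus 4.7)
The plan is to transpose the argument of Proposition \ref{prop:0} to the slow regime, where the small parameter organising the expansion is $\delta^H$ rather than $\delta$. The key observation is that $Z^{\delta,H}$ lives on the long time scale $1/\delta$, so on the fixed horizon $[0,T]$ the increment $Z^{\delta,H}_s - Z^{\delta,H}_0$ is of size $O(\delta^H)$ in every $L^p$; this follows from $Z^{\delta,H}_t = \int_{-\infty}^t {\cal K}^\delta(t-u)\,dW_u$ with ${\cal K}^\delta(u)=\delta^{1/2}{\cal K}(\delta u)$, together with the small-argument asymptotic (\ref{eq:asyK1}) for ${\cal K}$. I would therefore Taylor expand
\[
\sigma_t^2 = \sigma_0^2 + 2 \sigma_0 p_0 \bigl(Z^{\delta,H}_t-Z^{\delta,H}_0\bigr) + \tilde g^\delta\bigl(Z^{\delta,H}_t\bigr),
\]
with $|\tilde g^\delta(y)| \le C\,(y-Z^{\delta,H}_0)^2$, so that the Taylor remainder contributes exactly at order $\delta^{2H}$.

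Following Proposition \ref{prop:0} I introduce the martingale $\psi^\delta_t = \EE[\int_0^T (Z_s^{\delta,H}-Z_0^{\delta,H})\,ds\,|\,{\cal F}_t]$ and the decomposition $\phi^\delta_t = \psi^\delta_t - \int_0^t (Z_s^{\delta,H}-Z_0^{\delta,H})\,ds$, apply It\^o to $Q^{(0)}_t(X_t)$, and use ${\cal L}_{\rm BS}(\sigma_0)Q^{(0)}_t = 0$ to reduce the drift to $\tfrac{1}{2}(\sigma_t^2-\sigma_0^2)(x^2\partial_x^2 Q^{(0)}_t)(X_t)\,dt$. The leading piece of that drift is cancelled by the It\^o expansion of $\sigma_0 p_0 \phi^\delta_t(x^2\partial_x^2 Q^{(0)}_t)(X_t)$ up to the leverage bracket $d\langle \psi^\delta, W\rangle_t = \theta^\delta_{t,T}\,dt$ with $\theta^\delta_{t,T} = \int_0^{T-t}{\cal K}^\delta(v)\,dv$, obtained by the same stochastic-Fubini step that produced (\ref{def:thetatT}). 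Using (\ref{eq:asyK1}) one finds ${\cal K}^\delta(v) = \delta^H v^{H-1/2}/\Gamma(H+\tfrac{1}{2})\bigl(1+o(1)\bigr)$ uniformly for $v\in[0,T]$, and therefore
\[
\theta^\delta_{t,T} = \delta^H\,\frac{(T-t)^{H+1/2}}{\Gamma(H+\tfrac{3}{2})} + O(\delta^{H+1}),
\]
which motivates the ansatz (\ref{def:Q1tdelta}) with the clean $D_{t,T}=(T-t)^{H+3/2}/\Gamma(H+\tfrac{5}{2})$. An It\^o expansion of $\delta^H\rho p_0 Q^{(1)}_t(X_t)$, with $Q^{(1)}_t$ solving ${\cal L}_{\rm BS}(\sigma_0)Q^{(1)}_t = -\sigma_0^2\,x\partial_x(x^2\partial_x^2 Q^{(0)}_t)(T-t)^{H+1/2}/\Gamma(H+\tfrac{3}{2})$ and $Q^{(1)}_T=0$, kills the remaining leverage drift, mirroring (\ref{eq:q1}).

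The last step is to collect all leftover terms into a remainder $R^\delta_{t,T}$ and show $\EE[R^\delta_{t,T}\,|\,{\cal F}_t]=O(\delta^{2H})$. I expect the main obstacle to be precisely this accounting, and to see cleanly why the rate is $\delta^{2H}$ rather than $\delta^2$: the Taylor error $\tilde g^\delta$ produces a contribution of size $(Z^{\delta,H}_s-Z^{\delta,H}_0)^2 = O(\delta^{2H})$; the cross-term $\phi^\delta_s(Z^{\delta,H}_s-Z^{\delta,H}_0)$ is $O(\delta^H)\cdot O(\delta^H)=O(\delta^{2H})$ since $\|\phi^\delta_t\|_{L^p}=O(\delta^H)$ by Gaussian estimates on ${\cal K}^\delta$; and the sub-leading part of $\theta^\delta_{t,T}$ is $O(\delta^{H+1})$, which multiplied by the $\delta^H$ leverage prefactor gives $O(\delta^{2H+1})$, absorbed in the error. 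The non-smooth payoff case follows from the same derivative blow-up bounds as in Section \ref{sec:euro}, since $D_{t,T}$ and $\theta^\delta_{t,T}$ obey the same powers of $T-t$ as in the stationary case.
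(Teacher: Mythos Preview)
Your proposal is correct and follows essentially the same route as the paper: Taylor-expand $\sigma_t$ around $\sigma_0$ using that $Z^{\delta,H}_t-Z^{\delta,H}_0$ has $L^p$-norm of order $\delta^H$ (the paper computes $\EE[(Z^{\delta,H}_t-Z^{\delta,H}_0)^2]=\sigma_H^2(\delta t)^{2H}+o(\delta^{2H})$ directly), then replay the martingale decomposition of Proposition~\ref{prop:0} with ${\cal K}^\delta$ in place of ${\cal K}$ and extract the leading $\delta^H$ piece of $D^\delta_{t,T}=\int_0^\tau(\tau-u){\cal K}^\delta(u)\,du$. One small accounting slip: the sub-leading part of $\theta^\delta_{t,T}$ enters the remainder directly as $O(\delta^{H+1})$, not multiplied by an extra $\delta^H$; but since $H+1>2H$ for $H\in(0,1)$ this is still absorbed in $O(\delta^{2H})$ and your conclusion is unaffected.
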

\rc{
   We remark that we indeed can expect  the situation with a slow volatility factor to 
   behave qualitatively as the situation with small  volatility fluctuations in Proposition \ref{prop:0} 
   from the point of view of the effect  the medium roughness. This follows 
   since we have from self-similarity of fractional Brownian motion that in distribution 
   $$
        \delta W_t^H    \stackrel{d}{=}  W_{\delta^{  {1}/{H}  } t}^{H} . 
    $$
    However,  we have
    $$
        \delta Z_t^H\mid_{a=a'}    \stackrel{d}{=}  Z_{\delta^{  {1}/{H}  }t}^{H}\mid_{a=\delta^{1/H} a'} ,  
    $$
   and thus the models (small volatility fluctuations versus slow) 
   differ both in a  strong sense and in distribution. Moreover,
    the models have  different interpretations   from the modeling viewpoint with
    for instance a  different skewness mechanism.  
    Note in particular that  this difference manifests itself
   in that for fixed Hurst coefficient $H$ the magnitude of both the correction and
   the error terms, deriving in particular from the modeling of correlation, 
   have a different scaling in $\delta$ for the two models.  For instance
   for small Hurst exponent $H$ we may expect, for given $\delta$,  the correction 
   (and also the error term) to be relatively larger in  the case of  the slow volatility
   factor.   
The random correction $\phi_t^\delta$ is of order $\delta^H$.
More exactly it is a zero-mean Gaussian random variable with variance
\begin{eqnarray}
\nonumber
\EE \big[ (\phi_{t}^\delta)^2\big] &=& \frac{\delta^{2H} T^{2+2H}}{\Gamma(H+\frac{3}{2})^2} 
\int_0^\infty \Big[ \big(1-\frac{t}{T} +v\big)^{H+\frac{1}{2}}-v^{H+\frac{1}{2}}  \\
&& - \big(1-\frac{t}{T} \big) \big(H+\frac{1}{2}\big) \big(v-\frac{t}{T} \big)_+^{H-\frac{1}{2}}
\Big]^2 dv +O(\delta^{2H+1}) ,
\label{eq:varphidelta}
\end{eqnarray}
for $t \in [0,T]$.}

\begin{proof}
We note that
$$
\sigma_t = \sigma_0 + p_0 (Z_t^{\delta,H}-Z_0^{\delta,H})+ g^\delta_t  ,
$$
where $g^\delta_t=F(Z_t^{\delta,H})-F(Z_0^{\delta,H})-F'(Z_0^{\delta,H}) (Z_t^{\delta,H}-Z_0^{\delta,H})$ and therefore
$$
| g^\delta_t |\leq \frac{1}{2} \| F''\|_\infty ( Z_t^{\delta,H}-Z_0^{\delta,H})^2 .
$$
We have
$$
\EE \big[ ( Z_t^{\delta,H}-Z_0^{\delta,H})^2 \big] = \int_0^{\delta t} {\cal K}(s)^2ds +\int_0^\infty \big[ {\cal K}(\delta t+s) -{\cal K}(s)\big]^2 ds ,
$$
which is of order $\delta^{2H}$:
$$
\EE \big[ ( Z_t^{\delta,H}-Z_0^{\delta,H})^2 \big] = \sigma_H^2 (\delta t)^{2H}
+o(\delta^{2H}) .
$$
Therefore $g^\delta_t$ is bounded in $L^p$ for any $p$ by a quantity of order $\delta^{2H}$.
We can then follow the same proof as the one of Proposition \ref{prop:0}.
The term 
$$
D_{t,T}^\delta = \int_0^\tau (\tau-u) {\cal K}^\delta(u) du ,
$$
is given by
$$
D_{t,T}^\delta = \delta^H \frac{(T-t)^{H+\frac{3}{2}}}{\Gamma(H+\frac{5}{2})} + O(\delta^{2H}).
$$
The variance of the correction 
$\phi_t^\delta$ is
$$
\EE \big[ (\phi_{t}^\delta)^2\big] = \int_0^t \Big( \int_t^T {\cal K}^\delta(s-u) ds \Big)^2 du 
+
\int_{-\infty}^0 \Big( \int_t^T {\cal K}^\delta(s-u) -{\cal K}^\delta(-u) ds \Big)^2 du ,
$$
which in turn gives (\ref{eq:varphidelta}).
\end{proof}

Proceeding as in the case of the small-amplitude stochastic volatiliy model,
we find that the implied volatility in the context of the European option is given by
\begin{equation}
I_t = \sigma_0 +  p_0 \frac{\phi_t^\delta}{T-t}
+  \delta^H  \frac{\rho p_0 }{\Gamma(H+\frac{5}{2})}\Big[ 
\frac{\sigma_0}{2} (T-t)^{H+\frac{1}{2}}+ \frac{\log(K/X_t)}{\sigma_0 (T-t)^{\frac{1}{2}-H}} \Big]
+O(\delta^{2H})  .
\end{equation}
The first two terms can be combined and rewritten as (up to terms of order $\delta^{2H}$):
\begin{equation}
\sigma_0 + p_0 \frac{\phi_t^\delta}{T-t}= \EE\Big[ \frac{1}{T-t} \int_t^T \sigma_s^2 ds |{\cal F}_t \Big]^{\frac{1}{2}} 
+O(\delta^{2H}).
\end{equation}

\section{Conclusion}
\label{sec:concl}%
We have presented an analysis of the European option price when the volatility is stochastic 
and has correlations that decay 
as a fractional power of the time offset.   
The stochastic volatility model is defined in terms 
of a fractional Ornstein Uhlenbeck process with Hurst exponent $H$
and the analysis is carried out when the typical amplitude of the volatility fluctuations is relatively small. 
Two situations are differentiated. First the situation
when $H \in (0,1/2)$ which corresponds to a ``short-range'' 
dependent process that is 
rough on short scales with correlations that decay very rapidly, faster
than linear decay, at the origin. 
Second the situation when $H \in (1/2,1)$ so that   
the correlations decay relatively 
slowly at large scales and then the volatility correlations are not
integrable.  We use a  martingale method approach to derive
a general expression for the Black-Scholes price covering the two
cases. 
In the short-range case the  rough  behavior 
on short scales gives rise to an implied volatility that 
diverges as the time to maturity goes to zero. 
In the long-range case the slow decay in the correlations 
gives a term structure of the implied volatility that diverges
as time to maturity goes to infinity. 
The main result  we have presented is specific in the sense that a particular stochastic volatility model
has been addressed, 
however, as we illustrate the framework can be adapted to related  models as long as 
some central covariance terms  can be computed.
We illustrate this by considering a model with slow, but order one, volatility fluctuations
and derive the associated fractional implied volatility term structure. 

\appendix

\section{Technical lemmas}
\label{sec:tech}%
In this appendix we state and prove a few technical lemmas related to 
some central quantities of interest that are used in the derivation of the price in Sections \ref{sec:option}
and \ref{sec:implied}.

The martingale $\psi_t$ is defined 
for any $t \in [0,T]$ by (\ref{def:Kt}).
It is used in the proof of Proposition  \ref{prop:0} and it has the following properties.
\begin{lemma}
\label{lem:1}%
$(\psi_t)_{t\in [0,T]}$ is a Gaussian square-integrable martingale and
\begin{equation}
d \left< \psi, W\right>_t = \Big(    \int_0^{T-t} {\cal K}(s) ds \Big) dt   ,
\quad \quad
d \left< \psi \right>_t =    \Big( \int_0^{T-t} {\cal K}(s) ds   \Big)^2 dt.
\end{equation}
\end{lemma}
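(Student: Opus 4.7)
The plan is to obtain an explicit moving-average representation of $\psi_t$ as a Wiener integral against $W$, from which the martingale property, Gaussianity and bracket identities will follow immediately.

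First I would plug the moving-average representation $Z_s^H = \int_{-\infty}^s \mathcal{K}(s-u)\,dW_u$ into the definition of $\psi_t$ and split the time integral at $t$. For $s \le t$ the inner conditional expectation is just $Z_s^H$ itself; for $s>t$, since $\mathcal{K}(s-u)=0$ is not the case but the stochastic integral over $(t,s]$ has zero conditional mean, one gets
\begin{equation*}
\EE[Z_s^H\mid \mathcal{F}_t] = \int_{-\infty}^t \mathcal{K}(s-u)\,dW_u , \qquad s>t .
\end{equation*}
Hence
\begin{equation*}
\psi_t = \int_0^t Z_s^H\,ds + \int_t^T\!\!\int_{-\infty}^t \mathcal{K}(s-u)\,dW_u\,ds .
\end{equation*}

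Next I would apply the stochastic Fubini theorem (justified because $\mathcal{K}\in L^2(0,\infty)$, and on $(-\infty,0)$ the kernel decays as $(-u)^{H-3/2}$ which is square-integrable at $-\infty$) to interchange the inner $ds$ and $dW_u$ integrals, and similarly rewrite $\int_0^t Z_s^H ds$. Combining the two pieces yields
\begin{equation*}
\psi_t = \int_{-\infty}^t g(u)\,dW_u , \qquad
g(u) = \int_{\max(0,u)}^T \mathcal{K}(s-u)\,ds ,
\end{equation*}
and crucially the integrand $g(u)$ does not depend on $t$: the contributions from $\int_0^t$ and $\int_t^T$ combine into $\int_{\max(0,u)}^T$. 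After a change of variable $s'=s-u$, for $u\in[0,T]$ one obtains $g(u)=\int_0^{T-u}\mathcal{K}(s')\,ds'$.

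From this representation the conclusions are immediate. As a Wiener integral of a deterministic $L^2$ kernel against a Brownian motion, $\psi_t$ is a Gaussian square-integrable martingale. Its brackets are read off directly:
\begin{equation*}
d\langle \psi,W\rangle_t = g(t)\,dt = \Bigl(\int_0^{T-t}\mathcal{K}(s)\,ds\Bigr)dt , \qquad
d\langle \psi\rangle_t = g(t)^2\,dt = \Bigl(\int_0^{T-t}\mathcal{K}(s)\,ds\Bigr)^2 dt .
\end{equation*}
The only real obstacle is the justification of the stochastic Fubini step and the $L^2$-integrability of $g$ on $(-\infty,T]$; for $u<0$ one uses the asymptotic $\mathcal{K}(t)\sim c\,t^{H-3/2}$ in (\ref{eq:asyK2}) to see that $g(u)=\int_0^T\mathcal{K}(s-u)\,ds$ decays like $|u|^{H-1/2}$ and in particular $g\in L^2((-\infty,0))$, while on $[0,T]$ boundedness of $g$ follows from $\mathcal{K}\in L^2(0,\infty)$ and Cauchy--Schwarz.
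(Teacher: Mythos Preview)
Your approach is essentially the same as the paper's: compute $\EE[Z_s^H\mid\mathcal{F}_t]$ via the moving-average representation, apply stochastic Fubini to express $\psi_t$ as a Wiener integral with a deterministic integrand, and read off the brackets. The paper writes the result as
\[
\psi_t = \int_{-\infty}^0 \Big[\int_0^T \mathcal{K}(s-u)\,ds\Big]\,dW_u + \int_0^t \Big[\int_u^T \mathcal{K}(s-u)\,ds\Big]\,dW_u,
\]
which is exactly your $\int_{-\infty}^t g(u)\,dW_u$ with $g(u)=\int_{\max(0,u)}^T\mathcal{K}(s-u)\,ds$.

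One small slip in your integrability check: the claimed decay $g(u)\sim |u|^{H-1/2}$ as $u\to -\infty$ is off. Using $\mathcal{K}(t)\sim c\,t^{H-3/2}$ gives $g(u)\sim \frac{c}{H-1/2}\big[(T-u)^{H-1/2}-(-u)^{H-1/2}\big]$, and the two terms nearly cancel to leave $g(u)\sim cT\,|u|^{H-3/2}$. This matters because $|u|^{H-1/2}$ is \emph{not} in $L^2((-\infty,0))$ for any $H\in(0,1)$, whereas the correct rate $|u|^{H-3/2}$ is. So your conclusion $g\in L^2$ is right, but the stated reason needs this correction.
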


\begin{proof}
For $t\leq s$, the conditional distribution of $Z_s^H$ given ${\cal F}_t$ is Gaussian with mean
\begin{equation}\label{eq:mean}
\EE \big[  Z_s^H |{\cal F}_t \big] = \int_{-\infty}^t {\cal K}(s-u) dW_u  ,
\end{equation}
and deterministic variance given by
$$
 {\rm Var} \big( Z_s^H |{\cal F}_t\big) = 
   \int_0^{s-t} {\cal K}(u)^2 du .
$$
Therefore we have
\begin{eqnarray*}
\psi_t &=& \int_0^t    Z_s^H  ds + \int_t^T \EE \big[  Z_s^H  |{\cal F}_t \big]  ds \\
&=& 
\int_0^t ds \int_{-\infty}^s {\cal K}(s-u) dW_u +
\int_t^T dt \int_{-\infty}^t {\cal K}(s-u) dW_u \\
&=&
   \int_{-\infty}^0 \Big[  \int_0^T {\cal K}(s-u) dt \Big] dW_u +
 \int_0^t  \Big[ \int_u^T  {\cal K}(s-u) dt \Big]  dW_u.
\end{eqnarray*}
This gives
\begin{eqnarray*}
d \left< \psi, W\right>_t
=   \Big( \int_t^T{\cal K}(s-t) ds \Big) dt  ,
\quad \quad 
d \left< \psi \right>_t
= \Big(  \int_t^T{\cal K}(s-t) ds \Big)^2 dt  ,
\end{eqnarray*}
as stated in the Lemma.
\end{proof}

We define the deterministic component 
\begin{equation}
\label{def:DtT}
D_{t,T} = \left< \psi, W\right>_T -  \left< \psi, W\right>_t  ,
\end{equation} 
that appears in Equation (\ref{def:Q1t}).
It has the following properties.

\begin{lemma}
\label{lem:2}
$D_{t,T} $ is a deterministic function of $T-t$ and it is given by
\begin{equation}
 D_{t,T} =   {\cal D}(T-t), \quad \quad {\cal D}(\tau) =  \int_0^\tau (\tau-u){\cal K}(u) du .
\end{equation}
\end{lemma}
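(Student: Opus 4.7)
The plan is to read off $D_{t,T}$ directly from Lemma \ref{lem:1} and then simplify the resulting double integral by Fubini's theorem. Since Lemma \ref{lem:1} identifies the absolutely continuous density of the bracket $\langle \psi, W\rangle_t$ with respect to Lebesgue measure, the difference $\langle \psi, W\rangle_T - \langle \psi, W\rangle_t$ is simply the time integral of that density from $t$ to $T$.

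Concretely, I would first write, using (\ref{def:DtT}) and Lemma \ref{lem:1},
\begin{equation*}
D_{t,T} = \int_t^T \Big( \int_0^{T-r} {\cal K}(s)\, ds \Big) dr,
\end{equation*}
which is already a deterministic quantity depending only on $T-t$, proving the first assertion of the lemma. The change of variable $v = T - r$ (so $dv = -dr$, with $v$ running over $[0,T-t]$) turns the outer integral into
\begin{equation*}
D_{t,T} = \int_0^{T-t} \Big( \int_0^{v} {\cal K}(s)\, ds \Big) dv.
\end{equation*}

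Next I would apply Fubini's theorem to the region $\{(s,v) : 0 \le s \le v \le T-t\}$, which is legitimate since ${\cal K} \in L^2(0,\infty)$ and in particular is locally integrable on $(0,\infty)$ for any $H \in (0,1)$ (cf.\ the asymptotics (\ref{eq:asyK1})--(\ref{eq:asyK2})). Interchanging the order of integration gives
\begin{equation*}
D_{t,T} = \int_0^{T-t} {\cal K}(s) \Big( \int_s^{T-t} dv \Big) ds = \int_0^{T-t} (T-t-s)\, {\cal K}(s)\, ds,
\end{equation*}
which is the claimed representation ${\cal D}(T-t) = \int_0^{T-t}(T-t-u){\cal K}(u)\, du$.

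There is no real obstacle here: the only mild point to check is the integrability condition justifying Fubini near $s=0$, which is automatic because $(T-t-s){\cal K}(s) \sim C\, s^{H-1/2}$ as $s\downarrow 0$ by (\ref{eq:asyK1}), and $H - 1/2 > -1$ for all $H\in(0,1)$. Once the representation is in the form ${\cal D}(\tau) = \int_0^\tau (\tau-u){\cal K}(u)\, du$, one can then recover the explicit expression (\ref{eq:represcalD}) by substituting the definition (\ref{def:calK}) of ${\cal K}$ and performing elementary changes of variable --- this part would be done separately in the remark following the lemma rather than inside its proof.
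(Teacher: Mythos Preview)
Your argument is correct and is exactly the natural computation: integrate the density of $\langle\psi,W\rangle$ identified in Lemma~\ref{lem:1} from $t$ to $T$, change variables, and apply Fubini. The paper in fact states Lemma~\ref{lem:2} without an explicit proof, so there is nothing to compare against; your derivation fills that gap cleanly. One minor simplification: since the paper records that ${\cal K}$ is nonnegative-valued, Tonelli applies outright and you need not invoke the local integrability estimate from (\ref{eq:asyK1}).
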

The function ${\cal D}$ can be written as (\ref{eq:represcalD})
and it has the following behavior:\\
For $a\tau \ll 1$, 
\begin{equation}
\label{eq:behDsmall}
{\cal D}(\tau) = \frac{1}{ \Gamma(H+\frac{5}{2}) a^{H+\frac{3}{2}} }
 \Big( (a\tau)^{H+\frac{3}{2}} 
+o\big( (a\tau)^{H+\frac{3}{2}}  \big) \Big).
\end{equation}
For $a\tau \gg 1$, 
\begin{equation}
\label{eq:behDlarge}
{\cal D}(\tau) = \frac{1}{ \Gamma(H+\frac{3}{2}) a^{H+\frac{3}{2}}}
\Big( (a\tau)^{H+\frac{1}{2}}  +o\big( (a\tau)^{H+\frac{1}{2}}  \big) \Big).
\end{equation}

Finally, we consider the random process $\phi_t$ defined by (\ref{def:phit}).
\begin{lemma}
\label{lem:3}
\begin{enumerate}
\item
 $\phi_t$ is a zero-mean Gaussian process with variance
 \begin{eqnarray*}
 {\rm Var}(\phi_t) = 
 \int_0^\infty \Big( \int_0^{T-t} {\cal K}(s+u) ds  \Big)^2 du  .
 \end{eqnarray*}
 \item
 There exists a constant $C$ (that depends on $H$)
such that the variance of $\phi_t$ can be bounded by
\begin{equation}
{\rm Var}(\phi_t) \leq C \, (T-t)^{2H} \wedge (T-t)^2.
\end{equation}
\item
$\phi_t$ is approximately equal to $(T-t) Z_t^H$ for small $T-t$:
\begin{equation}
\EE \Big[ \Big(\frac{\phi_t}{T-t} -Z_t^H \Big)^2 \Big]\stackrel{T-t \to 0}{\longrightarrow} 0 .
\end{equation}
\end{enumerate}
\end{lemma}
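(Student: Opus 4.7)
The plan is to exploit the moving-average representation (\ref{eq:eqOU}) of the fOU process throughout; this reduces each of the three claims to a deterministic estimate on the kernel ${\cal K}$. For part~1, I would start from $\EE[Z^H_s | {\cal F}_t] = \int_{-\infty}^t {\cal K}(s-v) dW_v$ for $s \geq t$, as already used in (\ref{eq:mean}), integrate in $s \in [t,T]$ and apply stochastic Fubini. The change of variable $u = t-v$ then yields
\[
\phi_t = \int_0^\infty \Big[\int_0^{T-t} {\cal K}(s+u) ds \Big] dW_{t-u},
\]
a Wiener integral and hence a zero-mean Gaussian variable whose variance is the $L^2$-norm squared of the deterministic integrand, as claimed.

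For part~2, I would establish the two bounds separately. The $(T-t)^2$ bound is immediate from Cauchy--Schwarz on the inner integral together with Fubini and $\|{\cal K}\|_{L^2(0,\infty)}^2 = \sigma_{\rm ou}^2$. The $(T-t)^{2H}$ bound is the main obstacle. Writing $\int_0^\tau {\cal K}(s+u) ds = \int_u^{u+\tau} {\cal K}(v) dv$ with $\tau = T-t$, I would split the outer integration in $u$ at $u=\tau$ and treat $\tau \leq 1$ and $\tau \geq 1$ separately; only $\tau \geq 1$ requires work, since for $\tau \leq 1$ the $(T-t)^2$ bound already dominates $\tau^{2H}$. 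On $[\tau,\infty)$, bound the inner integral by $\tau\sup_{v\geq u}{\cal K}(v)$, use the large-time asymptotic (\ref{eq:asyK2}) and compute $\int_\tau^\infty \tau^2 v^{2H-3} dv = O(\tau^{2H})$. On $[0,\tau]$ the argument depends on $H$: for $H > 1/2$, (\ref{eq:asyK2}) gives $\int_0^{2\tau}{\cal K} = O(\tau^{H-1/2})$ so the contribution is $\tau \cdot O(\tau^{2H-1}) = O(\tau^{2H})$; for $H < 1/2$ the crude bound fails (it would give $O(\tau)$) and I would instead use the tail function $F(u) = \int_u^\infty {\cal K}(v) dv$, which is finite and satisfies $F(u) = O(u^{H-1/2})$ as $u \to \infty$ by (\ref{eq:asyK2}). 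Since $\int_u^{u+\tau}{\cal K}(v) dv \leq F(u)$, one gets $\int_1^\tau F(u)^2 du = O(\tau^{2H})$ after integrating $u^{2H-1}$, and the contribution from $u \in [0,1]$ is bounded by the constant $F(0)^2$.

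For part~3, I would write $\phi_t/(T-t) = \EE\big[(T-t)^{-1} \int_t^T Z^H_s ds \,\big|\, {\cal F}_t\big]$, so that Jensen's inequality reduces the claim to $(T-t)^{-1} \int_t^T Z^H_s ds \to Z^H_t$ in $L^2$. This in turn follows from the $L^2$-continuity of $s \mapsto Z^H_s$, which is immediate from the stationary covariance formula of the fOU process since that covariance is continuous at the origin; a standard Riemann-sum estimate then gives the desired $L^2$-convergence of the average. The $L^2$-contractivity of the conditional expectation finally transfers the convergence to $\phi_t/(T-t)$, whose limit is $\EE[Z^H_t \,|\, {\cal F}_t] = Z^H_t$ by ${\cal F}_t$-measurability of $Z^H_t$.
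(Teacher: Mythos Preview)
Your argument is correct. Parts~1 and~3 essentially match the paper's proof: the paper also writes $\phi_t$ as a Wiener integral against $\int_0^{T-t}{\cal K}(s+u)\,ds$ (phrased via the covariance of $\EE[Z_s^H|{\cal F}_0]$), and for Part~3 it likewise applies Minkowski to $\frac{1}{T-t}\int_t^T(\EE[Z_s^H|{\cal F}_t]-Z_t^H)\,ds$ and then invokes $L^2$-continuity of translations of ${\cal K}$, which is the kernel-level version of your $L^2$-continuity of $s\mapsto Z_s^H$.

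Part~2 is where your route genuinely diverges. Rather than splitting the outer $u$-integral at $u=\tau$ and handling $H<1/2$ and $H>1/2$ by separate tail estimates on ${\cal K}$, the paper applies Minkowski's integral inequality directly:
\[
{\rm Var}(\phi_t)\;\leq\;\Big(\int_0^{T-t}\Big(\int_s^\infty{\cal K}(u)^2\,du\Big)^{1/2}ds\Big)^2.
\]
Since $\int_s^\infty{\cal K}^2\leq\sigma_{\rm ou}^2$ always and $\int_s^\infty{\cal K}^2=O(s^{2H-2})$ for large $s$ from (\ref{eq:asyK2}), both bounds $(T-t)^2$ and $(T-t)^{2H}$ fall out in one line, uniformly in $H$. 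Your approach is perfectly valid but expends more effort: the case split on $H$ is needed only because you bound the inner integral in $L^\infty$ of ${\cal K}$ (via $\sup$ or the tail $F(u)$) rather than in $L^2$. The Minkowski route buys uniformity and brevity; your route has the minor advantage of working entirely with pointwise kernel asymptotics and never invoking an integral inequality beyond Cauchy--Schwarz.
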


\begin{proof}
We can express the variance of $\phi_t$ as:
\begin{eqnarray*}
{\rm Var}(\phi_t) =  \int_0^{T-t} ds \int_0^{T-t} ds' 
{\rm Cov}\big( \EE \big[ Z_s^H|{\cal F}_0\big] ,\EE \big[ Z_{s'}^H|{\cal F}_0\big] \big) ,
\end{eqnarray*}
which gives the first item since 
$$
{\rm Cov}\big( \EE \big[ Z_s^H|{\cal F}_0\big] ,\EE \big[ Z_{s'}^H|{\cal F}_0\big] \big) =
\int_{-\infty}^0 {\cal K}(s-u) {\cal K}(s'-u) du .
$$
Furthermore
\begin{eqnarray*}
{\rm Var}(\phi_t) &\leq &  \Big( \int_0^{T-t} 
{\rm Var}\big( \EE \big[ Z_s^H|{\cal F}_0\big]\big)^{1/2} ds
\Big)^2 \\
&\leq & \Big(  \int_0^{T-t}    
   \Big(\int_s^{\infty} {\cal K}(u)^2 du\Big)^{1/2} ds
 \Big)^2\\
&\leq &  C \, (T-t)^{2H} \wedge (T-t)^2,
\end{eqnarray*}
which gives the second item of the lemma.

Similarly, we have
\begin{eqnarray*}
\EE \Big[\Big(\frac{\phi_t}{T-t}-Z_t^H \Big)^2 \Big] &\leq &
\Big( \frac{1}{T-t} \int_0^{T-t} ds 
{\rm Var}\big( \EE \big[ Z_s^H|{\cal F}_0\big] - Z_0^H \big)^{1/2}
\Big)^2, 
\end{eqnarray*}
and
\begin{eqnarray*}
 \EE \big[ Z_s^H|{\cal F}_0\big] - Z_0^H
 =
  \int_{-\infty}^0  \big( {\cal K}(s-u) - {\cal K}(-u)\big) dW_u    ,
\end{eqnarray*}
so that 
\begin{eqnarray*}
\EE \Big[\Big(\frac{\phi_t}{T-t}-Z_t^H \Big)^2 \Big] &\leq &
\Big( \frac{1}{T-t} \int_0^{T-t} \Big[
 \int_0^\infty \big({\cal K}(s+v)-{\cal K}(v) \big)^2 dv \Big] ds \Big)^2  .
\end{eqnarray*}
As $s \to 0$, we have $ \int_0^\infty \big({\cal K}(s+v)-{\cal K}(v) \big)^2 dv \to 0$
by Lebesgue's dominated convergence theorem
(remember ${\cal K}\in L^2$),
which gives the third item.
\end{proof}

\section{\bc{Extension to a general stochastic volatility model}}\label{app:gen} 
In the paper, we model the volatility as a bounded function of a fOU process.
In fact it is straightforward to extend all the results to a volatility model that is a bounded
function of a stationary Gaussian process whose correlation properties are qualitatively similar
as the ones of a fOU process. 
In this appendix we consider the situation when the volatility is
\begin{equation}
\label{eq:fSV:app}
\sigma_t =\bar{\sigma}+F(\delta Z_t) ,
\end{equation}
for $Z_t$ a stationary Gaussian process with mean zero of the form
\begin{equation}
Z_t = \int_{-\infty}^t {\cal K}(t-s) dW_s   ,
\end{equation}
where  $W_t$ is a standard Brownian motion and
${\cal K} \in L^2(0,\infty)$ is a general kernel instead of the specific kernel (\ref{def:calK})
corresponding to a fOU.
Then the Gaussian process $Z_t$ has mean zero, variance
\begin{equation}
\sigma^2_Z = \int_0^\infty {\cal K}^2(u) du  ,
\end{equation}
and covariance
$$
\EE [ Z_t Z_{t+s}] = \int_0^\infty {\cal K}(u) {\cal K}(u+s) du .
$$
As  before (above Proposition \ref{prop:0}), 
the function $F$ is assumed to be one-to-one, 
smooth, bounded from below by a constant larger than $-\bar{\sigma}$,
with bounded derivatives,
and such that $F(0)=0$ and $F'(0)=1$. 
Proposition \ref{prop:0} then holds true, with the function ${\cal D}$ defined by 
\begin{equation}
\label{def:calDB}
{\cal D}(\tau ) = \int_0^\tau (\tau-u) {\cal K}(u) du ,
\end{equation}
and the implied volatility in the context of the European option  is still given by (\ref{eq:iv1}) with $D_{t,T}={\cal D}(T-t)$.
The behavior of the function ${\cal D}$ is determined by the one of the kernel ${\cal K}$ and we
consider in more detail two cases corresponding respectively to long-  and short-range correlations:
\begin{enumerate}
\item
There exists $c_Z  \neq 0$ such that 
\begin{equation}
\label{eq:lrLapp}
{\cal K}(t) = c_Z t^{H-\frac{3}{2}} \big(1+o(1)\big) \mbox{ as }t \to \infty. 
\end{equation}
If $H \in (1/2,1)$ this implies that ${\cal K}$
is not integrable at infinity and, as we will see below (see Lemma \ref{lem:1:app}),
 the covariance function of $Z_t$ has a tail behavior similar 
to that of a fOU at infinity. In other words,  $Z_t$ possesses long-range correlation properties, and 
the implied volatility has the same form (\ref{eq:iv3}) as in the case of a fOU with Hurst index $H$,
with $c_Z \Gamma(H-1/2) / \Gamma(H+3/2) $ instead of $1/[ a \Gamma(H+3/2)]$.\\
\item There exists $d_Z  \neq 0$ such that 
\begin{equation}
\label{eq:srLapp}
{\cal K}(t) = d_Z t^{H-\frac{1}{2}} \big(1+o(1)\big) \mbox{ as } t \to 0.
\end{equation}
If $H \in (0,1/2)$ this implies that ${\cal K}$
is singular at zero and, as we will see below  (see Lemma \ref{lem:2:app}), 
the covariance function of $Z_t$ has a behavior similar to  that of a fOU at zero. 
 In other words,  $Z_t$ possesses  short-range correlation properties, and 
 the implied volatility has the same form (\ref{eq:iv2}) as in the case of a fOU with Hurst index $H$,
 however, with $d_Z \Gamma(H+1/2)/  \Gamma(H+5/2)$ replacing  $1/  \Gamma(H+5/2)$.
\end{enumerate}

\begin{lemma}
\label{lem:1:app}%
We assume (\ref{eq:lrLapp}).
\begin{enumerate}
\item
If $H \in (1/2,1)$, then  the covariance function of $Z_t$ satisfies 
\begin{equation}
\label{lem:1:app:eq1}
\EE [ Z_t Z_{t+s} ] = k_Z s^{2H-2} \big(1+o(1)\big)  \mbox{ as } s \to \infty ,
\end{equation}
with
\begin{equation}
\label{def:kZ}
k_Z = c_Z^2  \frac{\Gamma(2-2H)\Gamma(H-\frac{1}{2})}{\Gamma(\frac{3}{2}-H)}
=
c_Z^2 \frac{\Gamma(H-\frac{1}{2})^2 }{2\sin (\pi H) \Gamma(2H-1)} .
\end{equation}
\item
If $H \in (1/2,1)$, then the function ${\cal D}(\tau)$ defined by (\ref{def:calDB}) satisfies
\begin{equation}
\label{lem:1:app:eq2}
{\cal D} (\tau) = c_Z \frac{\Gamma(H-\frac{1}{2})}{\Gamma(H+\frac{3}{2})} \tau^{H+\frac{1}{2}} \big(1+o(1)\big)  \mbox{ as } \tau \to \infty. 
\end{equation}
\end{enumerate}
\end{lemma}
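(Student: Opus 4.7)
The plan is to treat both items uniformly. In each case I express the quantity of interest as an integral against the kernel ${\cal K}$, split the domain of integration at a fixed cutoff $A$, and apply the tail asymptotic ${\cal K}(t)=c_Z t^{H-\frac{3}{2}}(1+o(1))$ only in the outer region; for the inner region I rely on the $L^2$-membership of ${\cal K}$ together with the fact that the other kernel factor is already in the tail regime.

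For the first item, starting from $\EE[Z_t Z_{t+s}] = \int_0^\infty {\cal K}(u){\cal K}(u+s)\,du$, I would split $\int_0^\infty = \int_0^A + \int_A^\infty$. On $[0,A]$, the factor ${\cal K}(u+s)$ can be replaced uniformly in $u$ by $c_Z s^{H-\frac{3}{2}}(1+o(1))$, so this contribution is $O(s^{H-\frac{3}{2}})=o(s^{2H-2})$ since $H>1/2$. On $[A,\infty)$, the substitution $u=sv$ places both arguments $sv$ and $s(v+1)$ into the tail regime, and after factoring out $c_Z^2 s^{2(H-3/2)}$ one obtains
\begin{equation*}
\int_A^\infty {\cal K}(u){\cal K}(u+s)\,du = c_Z^2 s^{2H-2}\int_{A/s}^\infty v^{H-\frac{3}{2}}(v+1)^{H-\frac{3}{2}}\,dv\,(1+o(1)).
\end{equation*}
The limiting improper Beta integral equals $B(H-\tfrac{1}{2},2-2H) = \Gamma(H-\tfrac{1}{2})\Gamma(2-2H)/\Gamma(\tfrac{3}{2}-H)$, which yields the first expression of $k_Z$ in (\ref{def:kZ}); the equivalence with the second expression follows by combining the reflection formula $\Gamma(z)\Gamma(1-z)=\pi/\sin(\pi z)$ with the Legendre duplication formula applied to $\Gamma(2H-1)$.

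For the second item I would write ${\cal D}(\tau)=\int_0^\tau(\tau-u){\cal K}(u)\,du = \int_0^A+\int_A^\tau$. The inner piece is dominated by $\tau\int_0^A|{\cal K}(u)|\,du \leq \tau A^{1/2}\|{\cal K}\|_{L^2(0,A)}$, hence $O(\tau)=o(\tau^{H+\frac{1}{2}})$ since $H>1/2$. In the outer piece the rescaling $u=\tau v$ gives $\tau^2\int_{A/\tau}^1(1-v){\cal K}(\tau v)\,dv$, and since $\tau v\geq A$ on this range the tail asymptotic applies uniformly; letting $\tau\to\infty$ (with $A/\tau\to 0$) produces
\begin{equation*}
c_Z \tau^{H+\frac{1}{2}}\int_0^1(1-v)v^{H-\frac{3}{2}}\,dv\,(1+o(1)) = c_Z\,\frac{\Gamma(H-\tfrac{1}{2})}{\Gamma(H+\tfrac{3}{2})}\,\tau^{H+\frac{1}{2}}\,(1+o(1)),
\end{equation*}
which is the claimed form.

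The main technical point, common to both items, is the uniform replacement of ${\cal K}$ by its asymptotic inside the integrals. I would handle this in the standard way: given $\epsilon>0$, choose $A$ so that $|{\cal K}(t)/(c_Z t^{H-\frac{3}{2}})-1|<\epsilon$ for $t\geq A$, run the estimates with this bound, take $s$ (resp.\ $\tau$) to infinity with $A$ fixed, and finally let $\epsilon\to 0$. Dominated convergence for the Beta integral in the first item is applicable precisely because $v^{H-\frac{3}{2}}(1+v)^{H-\frac{3}{2}}$ is integrable at $0$ (requiring $H>1/2$) and at $\infty$ (requiring $H<1$), i.e.\ exactly on the range of Hurst indices the lemma addresses.
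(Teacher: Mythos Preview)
Your proposal is correct and follows essentially the same approach as the paper: fix a threshold $A$ (the paper's $S^\eps$) beyond which the tail asymptotic holds to within $\eps$, control the inner piece by a direct bound (using ${\cal K}\in L^2$ and the decay of the second factor), compare the outer piece to the pure power-law integral, and finish with a $\limsup \leq C\eps$ argument followed by $\eps\to 0$. The only cosmetic difference is organizational---the paper writes the comparison as ${\cal C}(s)-\tilde{\cal C}(s)$ and ${\cal D}(\tau)-\tilde{\cal D}(\tau)$ with $\tilde{\cal C},\tilde{\cal D}$ the closed-form power-law integrals, whereas you rescale first and pass to the limit in the Beta integral---but the underlying estimates and the role of the hypothesis $H\in(1/2,1)$ (integrability of $v^{H-3/2}(v+1)^{H-3/2}$ at both ends, and $s^{H-3/2}=o(s^{2H-2})$, $\tau=o(\tau^{H+1/2})$) are identical.
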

If $Z_t$ is the fOU process (\ref{eq:Zd}), we have $c_Z = 1/[a\Gamma(H-1/2)]$.
In this case,
we can check that 
$k_Z = a^{-2} / [2 \sin (\pi H) \Gamma(2H-1) ] = \sigma_{\rm ou}^2 a^{2H-2} / \Gamma(2H-1) $,
which confirms that (\ref{lem:1:app:eq1}-\ref{def:kZ}) give (\ref{eq:corZG3}),
while (\ref{lem:1:app:eq2}) gives (\ref{eq:behDlarge}).

\begin{proof}
We denote
$$
{\cal C}(s) = \EE [ Z_t Z_{t+s}] = \int_0^\infty {\cal K}(u) {\cal K}(u+s) du
\mbox{ and }
\tilde{\cal C}(s) = c_Z^2  \int_0^\infty u^{H-\frac{3}{2}} (u+s)^{H-\frac{3}{2}} du .
$$
We can check that $\tilde{\cal C}(s)= k_Z s^{2H-2}$ with
$$
k_Z = c_Z^2 \int_0^\infty u^{H-\frac{3}{2}} (1+u)^{H-\frac{3}{2}} du 
= c_Z^2 \frac{\Gamma(2-2H)\Gamma(H-\frac{1}{2})}{\Gamma(\frac{3}{2}-H)}. 
$$
We now show that ${\cal C}(s)-\tilde{\cal C}(s)$ goes to zero as $s \to \infty$ faster than $s^{2H-2}$.
Let $\eps\in (0,1)$. There exists $S^\eps$ such that $|{\cal K}(t)  t^{-H+3/2} -c_Z |\leq \eps$ for any $t \geq S^\eps$.
We have for any $s \geq S^\eps$:
\begin{eqnarray*}
s^{2-2H}\big|{\cal C}(s)-\tilde{\cal C}(s)\big|
&\leq&
s^{2-2H} \int_0^{S^\eps} \big|
 {\cal K}(u) {\cal K}(u+s) - 
c_Z^2  u^{H-\frac{3}{2}} (u+s)^{H-\frac{3}{2}} \big| du
\\
&&
+s^{2-2H} \int_{S^\eps}^\infty
|{\cal K}(u)| |{\cal K}(u+s) - c_Z (u+s)^{H-\frac{3}{2}}| du  \\
&&
+s^{2-2H} \int_{S^\eps}^\infty
|c_Z| (u+s)^{H-\frac{3}{2}} |{\cal K}(u) - c_Z u^{H-\frac{3}{2}}| du \\
&\leq&
s^{2-2H} \int_0^{S^\eps} \big|
 {\cal K}(u) {\cal K}(u+s) - 
c_Z^2  u^{H-\frac{3}{2}} (u+s)^{H-\frac{3}{2}} \big| du
\\
&&
+s^{2-2H} \eps( 2|c_Z| +\eps) \int_0^\infty (u+s)^{H-\frac{3}{2}} u^{H-\frac{3}{2}} du  .
\end{eqnarray*}
As $s \to \infty$ the first term of the right-hand side goes to zero by Lebesgue dominated convergence theorem
because $(2-2H)+(H-3/2)<0$.
This gives
\begin{eqnarray*}
\limsup_{s \to \infty}
s^{2-2H}\big|{\cal C}(s)-\tilde{\cal C}(s)\big|
&\leq&
 \eps( 2|c_Z| +\eps) \int_0^\infty (u+1)^{H-\frac{3}{2}} u^{H-\frac{3}{2}} du.
\end{eqnarray*}
Since this holds true for any $\eps \in(0,1)$, this proves (\ref{lem:1:app:eq1}).

We denote
$$
\tilde{\cal D}(\tau) = \int_0^\tau (\tau-u) c_Z u^{H-\frac{3}{2}} du ,
$$
which is given by 
$$
\tilde{\cal D}(\tau) = \frac{c_Z }{H^2-\frac{1}{4}} \tau^{H+\frac{1}{2}}  
= c_Z\frac{ \Gamma(H-\frac{1}{2})}{\Gamma(H+\frac{3}{2})} \tau^{H+\frac{1}{2}}  .
$$
Let $\eps \in (0,1)$.  As mentioned above, there exists $S^\eps$ such that $|{\cal K}(t)  t^{-H+3/2} -c_Z |\leq \eps$ for any $t \geq S^\eps$.
We have then for any $\tau \geq S^\eps$:
\begin{eqnarray*}
\tau^{-H-\frac{1}{2}} \big| {\cal D}(\tau) - \tilde{\cal D}(\tau) \big|
& \leq &
\tau^{-H -\frac{1}{2}} \int_0^{S^\eps} (\tau-u) \big| {\cal K}(u) -c_Z u^{H-\frac{3}{2}}\big| du
\\
&& +
\tau^{-H -\frac{1}{2}} \eps \int_{S^\eps}^\tau (\tau-u) u^{H-\frac{3}{2}} du  .
\end{eqnarray*}
As $\tau \to \infty$ the first term of the right-hand side goes to zero by Lebesgue dominated convergence theorem
because $-(H+1/2)+1<0$.
This gives
\begin{eqnarray*}
\limsup_{\tau \to \infty}
\tau^{-H-\frac{1}{2}} \big| {\cal D}(\tau) - \tilde{\cal D}(\tau) \big|
& \leq &
\eps
\int_{0}^1 (1-u) u^{H-\frac{3}{2}} du .
\end{eqnarray*}
Since this holds true for any $\eps \in(0,1) $, this proves (\ref{lem:1:app:eq2}).
\end{proof}

\begin{lemma}
\label{lem:2:app}%
We assume (\ref{eq:srLapp}).
\begin{enumerate}
\item
If $H \in (0,1/2)$ and if ${\cal K}$ satisfies the two technical conditions:
\begin{enumerate}
\item[(CB.2.1)] ${\cal K}$
is integrable and Lipschitz on $(1,\infty)$.
\item[(CB.2.2)]
There exist functions $k_1(t)$ and $k_2(s)$ such that for all $t,s\in (0,1)$ we have 
$| \tilde{\cal K}(t+s)-\tilde{\cal K}(t)|\leq k_1(t) k_2(s)
$, where $\tilde{\cal K}(t) = {\cal K}(t) - d_Z t^{H-1/2}$, $k_1 \in L^2(0,1)$, and $\lim_{s \to 0} s^{-H} k_2(s) =0$.
\end{enumerate}
Then  the covariance function of $Z_t$ satisfies 
\begin{equation}
\label{lem:2:app:eq1}
\EE [ Z_t Z_{t+s} ] = \sigma^2_Z - q_Z s^{2H} + o(s^{2H} )  \mbox{ as } s \to 0 ,
\end{equation}
with 
\begin{eqnarray}
\label{def:qZ}
q_Z &=&  \frac{d_Z^2}{2}  \frac{\Gamma(H+\frac{1}{2})^2}{\Gamma(2H+1)\sin (\pi H)}   , \\
\sigma_Z^2 &=&   \int_{0}^\infty   {\cal K}^2(u)  du. \label{def:sZ}
\end{eqnarray}  
\item
For any $H \in (0,1)$,
the function ${\cal D}(\tau)$ defined by (\ref{def:calDB}) satisfies
\begin{equation}
\label{lem:2:app:eq2}
{\cal D} (\tau) = d_Z \frac{ \Gamma(H+\frac{1}{2})}{\Gamma(H+\frac{5}{2})} \tau^{H+\frac{3}{2}} 
\big( 1+o(1) \big)  \mbox{ as } \tau \to 0. 
\end{equation}
\end{enumerate}
\end{lemma}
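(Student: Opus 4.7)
My plan is to treat the two items separately, handling the easier $\mathcal{D}(\tau)$ asymptotic first and then the covariance expansion. Throughout I would use the decomposition $\mathcal{K}(u)=d_Z u^{H-1/2}+\tilde{\mathcal{K}}(u)$, where assumption (\ref{eq:srLapp}) gives $\tilde{\mathcal{K}}(u)=o(u^{H-1/2})$ as $u\to 0$.

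For item 2, define the principal part $\tilde{\mathcal{D}}(\tau)=d_Z\int_0^\tau(\tau-u)u^{H-1/2}\,du$. A direct computation with the Beta integral yields
\[
\tilde{\mathcal{D}}(\tau)=d_Z\Big[\frac{1}{H+\frac12}-\frac{1}{H+\frac32}\Big]\tau^{H+\frac32}=d_Z\,\frac{\Gamma(H+\frac12)}{\Gamma(H+\frac52)}\tau^{H+\frac32},
\]
after using $\Gamma(H+\tfrac52)=(H+\tfrac12)(H+\tfrac32)\Gamma(H+\tfrac12)$. For the error $\mathcal{D}(\tau)-\tilde{\mathcal{D}}(\tau)=\int_0^\tau(\tau-u)\tilde{\mathcal{K}}(u)\,du$, fix $\varepsilon>0$, choose $\delta>0$ with $|\tilde{\mathcal{K}}(u)|\le\varepsilon u^{H-1/2}$ on $(0,\delta)$, and for $\tau\le\delta$ bound the error by $\varepsilon\tau^{H+3/2}\Gamma(H+\tfrac12)/\Gamma(H+\tfrac52)$; letting $\varepsilon\to 0$ yields (\ref{lem:2:app:eq2}). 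This needs no integrability conditions beyond the local asymptotic, so it holds for any $H\in(0,1)$.

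For item 1, I would start from the identity
\[
2\bigl(\sigma_Z^2-\mathcal{C}(s)\bigr)=\mathbb{E}\bigl[(Z_{t+s}-Z_t)^2\bigr]=\int_0^\infty\!\bigl(\mathcal{K}(u+s)-\mathcal{K}(u)\bigr)^2 du+\int_0^s\mathcal{K}^2(u)\,du,
\]
obtained by splitting the stochastic integral representation of $Z_{t+s}-Z_t$ into the past contribution ($u<t$) and the new noise ($t<u<t+s$). For the second term, substituting the local expansion of $\mathcal{K}$ and using $|\tilde{\mathcal{K}}(u)|\le\varepsilon u^{H-1/2}$ on a neighborhood of $0$ gives $\int_0^s\mathcal{K}^2=\tfrac{d_Z^2}{2H}s^{2H}+o(s^{2H})$. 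For the first term I split at $u=1$: on $(1,\infty)$, the Lipschitz+integrability hypothesis (CB.2.1) gives $(\mathcal{K}(u+s)-\mathcal{K}(u))^2\le Ls(|\mathcal{K}(u+s)|+|\mathcal{K}(u)|)$, so this tail is $O(s)=o(s^{2H})$ since $2H<1$. On $(0,1)$ I write $\mathcal{K}(u+s)-\mathcal{K}(u)=d_Z[(u+s)^{H-1/2}-u^{H-1/2}]+[\tilde{\mathcal{K}}(u+s)-\tilde{\mathcal{K}}(u)]$, use (CB.2.2) with $k_2(s)=o(s^H)$ to show the $\tilde{\mathcal{K}}$-part contributes $o(s^{2H})$ in $L^2$, and rescale $u=sv$ in the principal part to obtain $d_Z^2 s^{2H}\int_0^\infty[(v+1)^{H-1/2}-v^{H-1/2}]^2dv+o(s^{2H})$ (the $v$-integral is convergent on both ends precisely when $H\in(0,1/2)$).

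Combining these pieces yields $2(\sigma_Z^2-\mathcal{C}(s))=d_Z^2 s^{2H}\bigl[\int_0^\infty((v+1)^{H-1/2}-v^{H-1/2})^2dv+\tfrac{1}{2H}\bigr]+o(s^{2H})$, and the bracket is exactly $\Gamma(H+\tfrac12)^2\sigma_H^2=\Gamma(H+\tfrac12)^2/[\Gamma(2H+1)\sin(\pi H)]$ by the fBm variance formula stated early in the paper. This identifies $q_Z$ as claimed in (\ref{def:qZ}). The main obstacle is the $(0,1)$ piece of the first integral: controlling the cross term $\int_0^1[(u+s)^{H-1/2}-u^{H-1/2}][\tilde{\mathcal{K}}(u+s)-\tilde{\mathcal{K}}(u)]du$ and the pure $\tilde{\mathcal{K}}$ term is the reason the rather technical condition (CB.2.2) is needed, and Cauchy--Schwarz together with the $o(s^H)$ decay of $k_2$ is what closes the gap to $o(s^{2H})$.
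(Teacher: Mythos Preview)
Your proposal is correct and follows essentially the same route as the paper: the same $\tilde{\cal D}(\tau)$ comparison with an $\varepsilon$-argument for item~2, and for item~1 the same variance decomposition $2(\sigma_Z^2-{\cal C}(s))=\int_0^\infty({\cal K}(u+s)-{\cal K}(u))^2 du+\int_0^s{\cal K}^2(u)\,du$, the same split at $u=1$ with Lipschitz/integrability controlling the tail as $O(s)=o(s^{2H})$, and the same use of (CB.2.2) with Cauchy--Schwarz to handle the $\tilde{\cal K}$ cross term and square term on $(0,1)$. Your identification of $q_Z$ via the fBm variance constant $\sigma_H^2$ is exactly how the paper evaluates the same integral.
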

The condition (CB.2.1) gives some control of  ${\cal K}$ away from the origin, and  
this specific condition can be relaxed.  
The necessary condition is that  
\[s^{-2H}  \int_{1}^\infty \big( {\cal K}(u+s)-{\cal K}(u)\big)^2 du\]
goes to zero as $s \to 0$ (see the proof below). 

The condition (CB.2.2) means that the remainder $\tilde{\cal K}(t) $ should be small enough near the origin.
A sufficient condition for (CB.2.2) is that 
$\tilde{\cal K}$ is $\alpha$-H\"older continuous over $(0,2)$ for some $\alpha>H$. 
Then (CB.2.2) is fulfilled 
with $k_1(t)=c$ and $k_2(s) = \tilde{k}_\alpha s^\alpha$, for some constant $c$.

If $Z_t$ is the fOU process (\ref{eq:Zd}),  then ${\cal K}$ is integrable and
Lipschitz  on $(1,\infty)$ and we have $d_Z=1/\Gamma(H+1/2)$ and
$\tilde{\cal K}(t) = -[a/\Gamma(H+1/2)] \int_0^t (t-s)^{H-1/2} e^{-as} ds$,  
which is $(H+1/2)$-H\"older continuous over $(0,2)$:
$|\tilde{\cal K}(t+s)-\tilde{\cal K}(t)|\leq[2a/\Gamma(H+3/2)] s^{H+1/2}$.
In this case we can check that 
$q_Z = 1/[2\Gamma(2H+1)\sin (\pi H)]= \sigma_{\rm ou}^2  a^{2H}/\Gamma(2H+1)$,
moreover we have then  $\sigma_Z^2=  \sigma_{\rm ou}^2$, 
which confirms that (\ref{lem:2:app:eq1}-\ref{def:sZ}) give (\ref{eq:corZG2}),
while (\ref{lem:2:app:eq2}) gives (\ref{eq:behDsmall}).

\begin{proof}
We can write
$$
\EE [ Z_t Z_{t+s}] =   \sigma^2_Z  -{\cal Q}(s) , \quad \quad  {\cal Q}(s)= \frac{1}{2}  \EE \big[ (Z_{t+s}-Z_t)^2 \big]  .
$$
We have $ {\cal Q}(s) ={\cal Q}_1(s) +{\cal Q}_2(s)$ with
$$
 {\cal Q}_1(s) =  \frac{1}{2} \int_0^\infty \big( {\cal K}(u+s)-{\cal K}(u)\big)^2 du,
 \quad \quad
 {\cal Q}_2(s)= \frac{1}{2}  \int_0^s {\cal K}(u)^2 du.
$$
The idea is to approximate these two functions by their versions when $d_Z t^{H-1/2}$ replaces  ${\cal K}(t)$.
We denote
$$
\tilde{\cal Q}_1(s) = \frac{d_Z^2}{2} \int_0^\infty \big(  (u+s)^{H-\frac{1}{2}} - u^{H-\frac{1}{2}} \big)^2 du ,\quad\quad
\tilde{\cal Q}_2(s) =  \frac{d_Z^2}{2} \int_0^s u^{2H-1} du .
$$
We can check that $\tilde{\cal Q}_1(s)+\tilde{\cal Q}_2(s)= q_Z s^{2H}$ with
$$
q_Z = \frac{d_Z^2}{2} \int_0^\infty \big( u^{H-\frac{1}{2}} - (1+u)^{H-\frac{1}{2}} \big)^2 du 
+\frac{d_Z^2}{2} \int_0^1  u^{2H-1} du
=\frac{d_Z^2}{2}  \frac{\Gamma(H+\frac{1}{2})^2}{\Gamma(2H+1)\sin (\pi H)} .
$$
We now show that ${\cal Q}_1(s)-\tilde{\cal Q}_1(s)$ goes to zero as $s \to 0$ faster than $s^{2H}$.
We have 
\begin{eqnarray*}
&& 2 s^{-2H}\big|{\cal Q}_1(s)-\tilde{\cal Q}_1(s)\big| \\
&&\leq 
s^{-2H} \Big| \int_0^{1}  \big( {\cal K}(u+s)-{\cal K}(u)\big)^2
- d_Z^2 \big(  (u+s)^{H-\frac{1}{2}} - u^{H-\frac{1}{2}} \big)^2 du\Big| \\
&& \quad
+ s^{-2H}  d_Z^2 \int_{1}^\infty \big(  (u+s)^{H-\frac{1}{2}} - u^{H-\frac{1}{2}} \big)^2 du
 + s^{-2H}  \int_{1}^\infty \big( {\cal K}(u+s)-{\cal K}(u)\big)^2 du\\
&& \leq 
2 |d_Z| s^{-2H} \Big[ \int_0^{1}   \big(  (u+s)^{H-\frac{1}{2}} - u^{H-\frac{1}{2}} \big)^2  du \Big]^{1/2}
\Big[ \int_0^{1}   \big(  \tilde{\cal K}(u+s) - \tilde{\cal K} (u) \big)^2  du \Big]^{1/2}\\
&& \quad + s^{-2H}   \int_0^{1}   \big(  \tilde{\cal K}(u+s) - \tilde{\cal K} (u) \big)^2  du 
 + s^{-2H}  d_Z^2 \int_{1}^\infty \big(  (u+s)^{H-\frac{1}{2}} - u^{H-\frac{1}{2}} \big)^2 du \\
&& \quad
 + s^{-2H}  \int_{1}^\infty \big| {\cal K}(u+s)- {\cal K}(u)\big| \big( |{\cal K}(u+s)| + |{\cal K}(u)| \big)  du\\
&&\leq 
2 |d_Z|  \Big[ \int_0^{\infty}   \big(  (u+1)^{H-\frac{1}{2}} - u^{H-\frac{1}{2}} \big)^2  du \Big]^{1/2}
\Big[ s^{-2H} \int_0^{1}   \big(  \tilde{\cal K}(u+s) - \tilde{\cal K} (u) \big)^2  du \Big]^{1/2}\\
&& \quad + s^{-2H}   \int_0^{1}   \big(  \tilde{\cal K}(u+s) - \tilde{\cal K} (u) \big)^2  du 
+ d_Z^2  \int_{1/s}^\infty \big(  (u+1)^{H-\frac{1}{2}} - u^{H-\frac{1}{2}} \big)^2 du  \\
&& \quad+ 2 s^{1-2H} L_{\cal K}   \int_0^\infty | {\cal K}(u)|du ,
\end{eqnarray*}
where $L_{\cal K}$ is the Lipschitz constant of ${\cal K}$ over $(1,\infty)$.
As $s \to 0$ 
the third term of the right-hand side goes to zero because the integral is convergent and
the fourth term goes to zero because $1-2H>0$. The first and second terms go to zero because
$$
 s^{-2H}   \int_0^{1}   \big(  \tilde{\cal K}(u+s) - \tilde{\cal K} (u) \big)^2  du 
 \leq
s^{-2H} k_2(s)^2  \int_0^1 k_1(u)^2 du ,
$$
$k_1 \in L^2(0,1)$, and $s^{-H} k_2(s) \to 0$ as $s\to 0$.
Therefore
$$
\lim_{s \to 0}
s^{-2H}\big|{\cal Q}_1(s)-\tilde{\cal Q}_1(s)\big|
=0 .
$$
We now show that ${\cal Q}_2(s)-\tilde{\cal Q}_2(s)$ goes to zero as $s \to 0$ faster than $s^{2H}$.
Let $\eps \in (0,1)$. There exists $S^\eps$ such that $|{\cal K}(t)   t^{-H+1/2} -d_Z |\leq \eps$ for any $t \leq S^\eps$.
We have for any $s \leq S^\eps$:
\begin{eqnarray*}
2   s^{-2H } \big| {\cal Q}_2(s) - \tilde{\cal Q}_2(s) \big|
& \leq &
s^{-2H } \int_0^s \big|{\cal K}(u)-d_H u^{H-\frac{1}{2}}\big|
 \big( |{\cal K}(u)|+ |d_Z| u^{H-\frac{1}{2}}\big) du\\
& \leq & 
s^{-2H} \eps (2 |d_Z| +\eps) \int_0^s u^{2H-1} du \leq \frac{(2 |d_Z| +\eps)\eps}{2H} .
\end{eqnarray*}
Since this holds true for any $\eps \in (0,1)$, we have
$$
\lim_{s \to 0}
s^{-2H}\big|{\cal Q}_2(s)-\tilde{\cal Q}_2(s)\big|
=0 ,
$$
which completes the proof of (\ref{lem:2:app:eq1}).

We denote
$$
\tilde{\cal D}(\tau) =  d_Z \int_0^\tau (\tau-u) u^{H-\frac{1}{2}} du ,
$$
which is given by 
$$
\tilde{\cal D}(\tau) = \frac{d_Z }{(H+\frac{1}{2})(H+\frac{3}{2})} \tau^{H+\frac{3}{2}}  = 
d_Z \frac{\Gamma(H+\frac{1}{2})}{\Gamma(H+\frac{5}{2})} \tau^{H+\frac{3}{2}}  .
$$
Let $\eps \in (0,1)$. There exists $S^\eps$ such that $|{\cal K}(t)   t^{-H+1/2} -d_Z |\leq \eps$ for any $t \leq S^\eps$.
We have for any $\tau \leq S^\eps$:
\begin{eqnarray*}
\tau^{-H-\frac{3}{2}} \big| {\cal D}(\tau) - \tilde{\cal D}(\tau) \big|
& \leq &
\tau^{-H -\frac{3}{2}} \eps \int_0^\tau (\tau-u) u^{H-\frac{1}{2}} du .
\end{eqnarray*}
This gives
\begin{eqnarray*}
\limsup_{\tau \to 0}
\tau^{-H-\frac{3}{2}} \big| {\cal D}(\tau) - \tilde{\cal D}(\tau) \big|
& \leq &
\eps\int_0^1(1-u) u^{H-\frac{1}{2}} du .
\end{eqnarray*}
Since this holds true for any $\eps \in (0,1)$, this proves (\ref{lem:2:app:eq2}).
\end{proof}


\begin{thebibliography}{99}

\bibitem{ait}
Y.  A\"it-Sahalia, J. Fan, and  J.  Li,
Testing for jumps in a discretely observed process,
Ann. Statist. {\bf 37} (2009), pp.~184--222. 


\bibitem{ait2}
Y.  A\"it-Sahalia and J. Jacod,
The leverage effect puzzle: Disentangling sources of bias at high frequency,
Journal of Financial Economics  {\bf 109} (2013), pp.~224--249.


\bibitem{alos07}
 E. Al\`os, J. A. Le\'on, and J. Vives,
On the short-time behavior of the implied volatility for jump-diffusion models with stochastic volatility,
Finance Stoch.  {\bf 11} (2007), pp.~571--589.

\bibitem{benth}
O. E. Barndorff-Nielsen, F. E. Benth, and A. E. D. Veraart, 
Modelling energy spot prices by volatility modulated L\'evy driven Volterra processes, 
Bernoulli  {\bf 19}  (2013), pp.~80--845.


\bibitem{gatheral2} 
C. Bayer, P. Friz, and J. Gatheral,  
Pricing under rough volatility,  
 Quantitative Finance {\bf 16} (2016), pp.~887--904.
 
\bibitem{beres} 
H. Berestycki, J. Busca, and I. Florent,  
Computing the implied volatility  in stochastic volatility models, 
Comm. Pure Appl. Math. {\bf 57} (2004), pp.~1352--1373. 

\bibitem{oksendal}
F. Biagini,  Y. Hu, B. \O ksendal, and T.  Zhang, 
Stochastic Calculus for Fractional Brownian Motion and Applications, 
Springer, London, 2008.

\bibitem{bollerslev}
T. Bollerslev, D. Osterrieder, N. Sizova, and G.  Tauchen,
Risk and return: Long-run relations, fractional cointegration, and return predictability,
Journal of Financial Economics {\bf 108} (2013), pp.~409--424.

\bibitem{carr}
P. Carr and L. Wu,
What type of processes underlies options? A simple robust test,
Journal of Finance {\bf 58} (2003), pp.~2581--2610.

\bibitem{cheridito03}
C. Cheridito,  H. Kawaguchi, and M. Maejima, 
Fractional Ornstein-Uhlenbeck processes, 
Electronic Journal of Probability {\bf  8} (2003), pp.~1--14.

\bibitem{viens1}  
A. Chronopoulou and F. G. Viens,
Estimation and pricing under long-memory stochastic volatility,
Annals of Finance {\bf  8}  (2012),  pp.~379--403.

\bibitem{viens2}  
A. Chronopoulou and F. G. Viens,
Stochastic volatility models with long-memory in discrete and continuous time,
Quantitative Finance {\bf 12} (2012), pp.~635--649. 

\bibitem{coutin} 
F. Comte, L. Coutin, and E. Renault,
Affine fractional stochastic volatility models,
Annals of Finance {\bf 8} (2010), pp.~337-378.

\bibitem{comte} 
F. Comte and E. Renault,
Long memory in continuous-time stochastic volatility models,
Mathematical  Finance  {\bf 8} (1998), pp.~291--323.  

\bibitem{cont} 
R. Cont,
Long range dependence in financial markets,
in Fractals in Engineering, edited by J. L\'evy V\'ehel and Evelyne Lutton, Springer, London, 2005, pp.~159--179.

\bibitem{corley}
S. Corley, J. Lebovits and J. L\'evy V\'ehel, 
Multifractional Stochastic volatility models,
 Mathematical Finance {\bf 24} (2014), pp.~364-402.
 
\bibitem{coutinb}
L. Coutin,  
An introduction to (stochastic) calculus with respect to fractional Brownian motion,  
in S\'eminaire de Probabilit\'es XL, Lecture Notes in Mathematics, Vol. 1899, Springer, 
2007, pp.~3--65.

\bibitem{taqqu} 
P. Doukhan, G. Oppenheim, and M. S. Taqqu,
Theory and Applications of Long-Range Dependence,
Birkh\"auser, Boston, 2003. 

\bibitem{duffie} 
D. Duffie, R. Pan, and K. Singleton, 
Transformation analysis and asset pricing for affine jump-diffusion, 
Econometrica {\bf  68} (2000), pp.~1343--1376.

\bibitem{engle} 
R. F. Engle, and A. J. Patton,
What good is a volatility model?,
Quantitative Finance {\bf 1} (2001), pp.~237--245.

\bibitem{figuer} 
J. E. Figueroa-L\'opez and S. Olafsson, 
Short-time expansions for close-to-the-money options under a L\'evy jump model with stochastic volatility, 
Finance and Stochastics {\bf 20} (2016), pp.~219--265.

\bibitem{fink13}
H. Fink, C. Kl\"uppelberg, and M. Z\"ahle, 
Conditional distributions of processes related to fractional Brownian motion, 
J. Appl. Prob. {\bf 50} (2013), pp.~166--183.

\bibitem{forde} 
M. Forde and H. Zhang, 
Asymptotics for rough stochastic volatility  and L\'evy models, 
preprint available at http://www.mth.kcl.ac.uk/\~{}fordem/. 

\bibitem{fouque00}
J. P. Fouque, G. Papanicolaou, and K. R. Sircar,
Derivatives in Financial Markets with Stochastic Volatility,
Cambridge University Press, Cambridge, 2000.

\bibitem{fouque03}
J. P. Fouque, G. Papanicolaou, K. R. Sircar, and K. Solna,
Singular perturbations in option pricing,
SIAM J. Appl. Math., {\bf 63} (2003), pp.~1648--1665.

\bibitem{smile}
J. P. Fouque, G. Papanicolaou,  K. R. Sircar,  and K. Solna,
Timing the smile,  
The Wilmott Magazine, March (2004).

\bibitem{fouque11}
J. P. Fouque, G. Papanicolaou,  K. R. Sircar,  and K. Solna,
Multiscale Stochastic Volatility for Equity, Interest Rate, and Credit Derivatives,
Cambridge University Press, Cambridge, 2011.
 
\bibitem{fukasawa11}
M. Fukasawa, 
Asymptotic analysis for stochastic volatility: martingale expansion, 
Finance and Stochastics {\bf 15} (2011), pp.~635--654.

\bibitem{fukasawa15}
M. Fukasawa,  
Short-time at-the-money skew and rough fractional  volatility,
arXiv:1501.06980.

\bibitem{gatheral1} 
J. Gatheral, T. Jaisson, and M. Rosenbaum,
Volatility is rough,  
arXiv:1410.3394.


\bibitem{jacquier}  
H. Guennoun, A. Jacquier, and P. Roome,
Aymptotic behaviour of the fractional Heston model, 
arXiv:1411.7653.
 
\bibitem{viens3}
A. Gulisashvili, F. Viens, and X. Zhang,
Small-time asymptotics for Gaussian self-similar stochastic volatility models,
arXiv:1505.05256.

\bibitem{henry} 
P. Henry-Labord\'ere,
Analysis, Geometry, and Modeling in Finance:
Advanced Methods in Option Pricing, Chapman \& Hall, Boca Raton, 2009.  
 
\bibitem{heston}
S. L. Heston, 
A closed-form solution for options with stochastic volatility with applicantions to bond  and currency options, 
The Review of Financial Studies {\bf 6} (1993), pp.~327--343. 

\bibitem{kaarakka}
T. Kaarakka and P. Salminen, 
On fractional Ornstein-Uhlenbeck processes,
Communications on Stochastic Analysis  {\bf 5} (2011), pp.~121--133. 

\bibitem{rodger}
R. Lee, 
 The Moment Formula for Implied Volatility at Extreme Strikes, 
 Mathematical Finance  {\bf 14} (2004),  pp.~469--480.

\bibitem{mandelbrot}
B. B. Mandelbrot and J. W. Van Ness,
Fractional Brownian motions, fractional noises and applications,
SIAM Review {\bf 10} (1968), pp.~422--437.

\bibitem{mendes2}
R. V. Mendes, M. J. Oliveira, and A. M. Rodrigues,
The fractional volatility model: An agent-based interpretation,
Physica A {\bf 387} (2008), pp.~3987--3994.

\bibitem{mendes1}
R. V. Mendes, M. J. Oliveira, and A. M. Rodrigues,
No-arbitrage, leverage and completeness in a fractional volatility model,
Physica A {\bf 419} (2015), pp.~470--478.

\bibitem{tankov} 
A. Mijatovic and P. Tankov,
A new look at short-term implied volatility in asset price models with jumps,
Mathematical Finance {\bf 26}  (2016), pp.~149--183.

\bibitem{muralev} 
A. A. Muralev,
Representation of franctional Brownian motion in terms of an infinite-dimensional Ornstein-Uhlenbeck process, 
Russ. Math. Surv. {\bf 66} (2011), pp.~439--441.


\bibitem{touzi} 
E. Renault and N. Touzi,
Option hedging  and implicit volatilities in a stochastic volatility model, 
Mathematical Finance {\bf 6} (1996), pp.~279--302.

\bibitem{mykland}  
C. D. Wang and P. A. Mykland,
The estimation of leverage effect with high-frequency data,
Journal of the American Statistical Association {\bf 109} (2014), pp.~197--215.

\end{thebibliography}
\end{document}